\newcommand{\IP}{{\sc{Item Pricing}}\xspace}
\renewcommand{\S}{{\mathcal S}}
\newenvironment{proofof}[1]{\noindent{\bf Proof of #1.}}%
        {\hspace*{\fill}$\Box$\par\vspace{4mm}}
\newcommand{\size}[1]{\ensuremath{\left|#1\right|}}
\newcommand{\ceil}[1]{\ensuremath{\left\lceil#1\right\rceil}}
\newcommand{\floor}[1]{\ensuremath{\left\lfloor#1\right\rfloor}}
\newcommand{\opt}{\mbox{\sf OPT}}
\newcommand{\be}{\begin{enumerate}}
\newcommand{\ee}{\end{enumerate}}
\newcommand{\bd}{\begin{description}}
\newcommand{\ed}{\end{description}}
\newcommand{\bi}{\begin{itemize}}
\newcommand{\ei}{\end{itemize}}
\newtheorem{lemma}{Lemma}[section]
\newtheorem{theorem}{Theorem}[section]
\newtheorem{definition}{Definition}[section]
\newenvironment{proof}{\par \smallskip{\bf Proof:}}{\hfill\stopproof}
\def\stopproof{\square}
\def\square{\vbox{\hrule height.2pt\hbox{\vrule width.2pt height5pt \kern5pt
\vrule width.2pt} \hrule height.2pt}}
\renewcommand{\phi}{\varphi}
\newcommand{\eps}{\epsilon}
\newcommand{\E}[1]{\text{\bf E}[#1]}
\newcommand{\pr}[1]{\text{\bf Pr}\left [#1\right]}
\renewcommand{\d}{{\rm d}}
\renewcommand{\u}{\mathcal{U}}
\begin{document}

\begin{titlepage}
\title{Dynamic and Non-Uniform Pricing Strategies
  for Revenue Maximization}
\author{Tanmoy Chakraborty\ \ \ \ \ Zhiyi Huang\ \ \ \ \ Sanjeev Khanna \\
University of Pennsylvania \\
\tt{tanmoy@seas.upenn.edu}, \tt{hzhiyi@seas.upenn.edu},\tt{sanjeev@cis.upenn.edu}
}

\maketitle

\thispagestyle{empty}
\begin{abstract}
\thispagestyle{empty}

We consider the \IP problem for revenue maximization in the
{\em limited supply} setting, where a single seller with $n$ items caters to
$m$ buyers with unknown subadditive valuation functions who arrive
in a sequence.
The seller sets the prices on individual items,
and the price of a bundle of items is the sum of the prices of the
individual items in the bundle. Each buyer buys a subset of yet unsold items
so as to maximize her utility, defined as her valuation of the subset minus the price of the
subset. Our goal is to design pricing strategies,
possibly randomized, that guarantee an expected revenue that is within
a small factor $\alpha$ of the maximum possible {\em social welfare} --
an upper bound on the maximum revenue that can be generated by any pricing
mechanism.

Much of the earlier work
has focused on the unlimited supply setting,
where selling items to some buyer does not affect their availability to
the future buyers. Recently, Balcan
et. al. \cite{BBM08} studied the limited supply setting, giving a
simple randomized algorithm that assigns a single randomly chosen price to all items
({\em uniform pricing strategy}) in the beginning, and never changes it ({\em
  static pricing strategy}). They showed that this strategy guarantees an $2^{O(\sqrt{\log n \log \log
    n})}$-approximation, and moreover, no static uniform pricing strategy can give better than
$2^{\Omega(\log^{1/4} n)}$-approximation.

We relax the space of strategies considered in two directions: we
consider {\em dynamic uniform} strategies, which can change the price
upon the arrival of each buyer but the price on all unsold items is
the same at all times, and {\em static non-uniform strategies}, which
can assign different prices to different items but can never change it
after setting it initially. Dynamic strategies can be especially
useful in online stores, where it is easy to show different prices
to different buyers. We design dynamic and non-uniform pricing strategies
that give a poly-logarithmic approximation to maximum revenue, significantly improving
upon the previous $2^{O(\sqrt{\log n \log \log
    n})}$-approximation. We also give a strengthened lower bound
    of $2^{\Omega(\sqrt{\log n})}$ for approximation factor achieved by any
    static uniform pricing strategy.
Thus in the limited supply setting,  our results highlight a strong separation between the power of dynamic and non-uniform pricing versus
static uniform pricing.
To our knowledge, this is the first non-trivial analysis of
dynamic and non-uniform pricing schemes for revenue maximization.

\end{abstract}

\end{titlepage}

\section{Introduction}

We consider the following \IP problem. Consider a
finite set of items owned by a single seller, who wishes to sell them
to multiple prospective buyers. The seller can price each item
individually, and the price of a set of items is simply the sum of
the prices of the individual items in the set. The buyers arrive
in a sequence, and each buyer has her own {\em valuation
  function} $v(S)$, defined on every subset $S$ of items. We assume that the valuation functions to be {\em
  subadditive}, which means that $v(S) + v(T) \geq v(S\cup
T)$ for any pair of subsets $S, T$ of items. For some results, we shall assume the
valuations to be XOS, that
is, they can be expressed as the maximum of several additive functions.

If a buyer buys a subset $S$ of items $S$, her
{\em utility} is defined as her valuation $v(S)$ of the set minus
the price of the set $S$. Moreover, we assume the {\em limited supply} setting
where a buyer can buy only yet unsold items.
We assume that every buyer is selfish and rational, and
thus always buy a subset of items that maximizes her utility.
The strategy used by the seller in choosing the prices of the items is
allowed to be randomized, and is referred to as a {\em pricing strategy}. The revenue
obtained by the seller is the sum of the amounts paid by each buyer, and our goal is to
design pricing strategies that maximize the expected revenue of the
seller. This problem is made difficult by the fact that the seller has
no knowledge of the valuation functions of the buyers, apart from the promise
that they are subadditive. This is, for instance, in contrast to the Bayesian
mechanism designs for revenue maximization, which assume that the
valuation functions come from a known prior distribution. Optimal
mechanisms, such as that given by Myerson \cite{M81}, exist under this
knowledge.

\medskip
\noindent
{\bf Pricing Strategies:} A {\em uniform} pricing strategy is one where at
any point of time, all unsold items are assigned the same price.
The seller may set prices on the items initially and never change
them, so that cost of an (unsold) item is the same for every buyer. We call such a
strategy to be a
{\em static} pricing strategy. Static pricing is the most widely
applied pricing scheme till date. Alternatively, a seller may set fresh
prices on the arrival of each buyer (without knowing the
buyer's valuation function) -- we shall call this a {\em dynamic}
pricing strategy. Dynamic strategies have become more widely
applicable with the introduction of online stores, since it is quite
easy for online stores to show different prices to different
customers. However, a dynamic strategy in which the price of an item
fluctuates a lot may not be desirable in some applications. So we
introduce an interesting subclass of dynamic strategies, called {\em
  dynamic monotone} pricing strategies, where the price of an item can
only decrease with time.

\medskip
\noindent
{\bf Social Welfare:}
An allocation of items involves distributing the items
among the buyers, and the {\em social welfare} of an allocation the
sum of the buyers' valuations for the items received by each of
them. We denote the maximum social welfare, achieved by any
allocation, by $\opt$.
We measure the performance of a pricing strategy as the
ratio of the maximum social welfare against the smallest expected
revenue of the strategy, for any adversarially chosen ordering of the
buyers. (Some of our results, where it will be explicitly stated,
shall consider expected revenue under the assumption that the order in
which buyers arrive is uniformly random.) If
this ratio is at most $\alpha$ on any instance (where $\alpha$ can
depend on the size of the instance), we say that the strategy achieves
an $\alpha$-approximation. Note that the maximum social welfare is an
upper bound on the revenue the seller can obtain under any
circumstance. In fact, there
exists simple instances with $n$ items and a single buyer where the
maximum social welfare is $\log n$, but the revenue can never exceed
$1$ for any pricing function \cite{BBM08}. Thus we are comparing the
performance of our strategies against a bar that is significantly
higher than the optimal strategy, and we can never hope to achieve
anything better than a logarithmic approximation. Our general goal
is to design pricing strategies that achieve polylogarithmic
approximation.

\medskip
\noindent
{\bf Related Work:}
The \IP problem is closely related to the extensive body of
literature in combinatorial auctions \cite{CSS05}, which is the
setting as described above, except that the buyers need not be
arriving in a sequence but instead may place simultaneous bids on the
items. A lot of recent literature has focused on social welfare maximization.
This includes efficient approximation algorithms for computing maximum
social welfare given oracle access to the valuation functions
(eg. \cite{F06}), as well as on efficiently computable mechanisms that
maximize social welfare and are truthful (eg. \cite{LOS02,
  LS05, DNS06, D07}). For the first problem, Feige \cite{F06} gave a
constant approximation for subadditive buyers, while for the second
problem, Dobzinski et. al. \cite{DNS06,D07} gave logarithmic approximation
 when buyers have XOS valuations and subadditive valuations
 respectively. The mechanism achieving this approximation is in fact a
 static uniform pricing strategy.

A fair amount of research has focused on algorithms and truthful
mechanisms for revenue maximization as well, but it has mostly considered
the {\em unlimited supply setting} \cite{HK07}, where
unlimited number of copies of each item is available to
the seller. So one buyer receiving an item does not stop another
buyer receiving the same item. Thus, the order in which buyers arrive
has no effect on the performance of the mechanism, and in fact, the
buyers can be handled independently.
Some research has been directed towards developing new
truthful mechanisms that maximize revenue \cite{BBHM05, FGHK02,
  GHW01}, while others have focused on designing strategies for item
pricing that maximizes revenue. The item pricing
problem has received special attention because it is and has been
the most widely applied mechanism for a seller wishing to sell items
to potential buyers. All the research has focused only on
designing static strategies (eg. \cite{GHKKKM05, BB06, AFMZ04, BK06,
  DFHS06}), and moreover, some of them have restricted their
attention to finding envy-free pricing, which implies that the buyers
come simultaneously, and the pricing must ensure that two buyers does
not seek the same item. Moreover, most of these works assume severely
restricted classes of valuation functions. For example,
\cite{GHKKKM05,BB06} assume that all buyers are single-minded
bidders.
Their strategies were not only static but also
uniform. Unsurprisingly, finding envy-free pricing is hard
\cite{DFHS06}, and their results do not extend to more general buyer
valuations such as XOS or subadditive. In all this work, the
performance of a strategy has been measured as the ratio of the
 maximum social welfare to the expected revenue obtained.

More recently, Balcan, Blum and Mansour \cite{BBM08} considered static
pricing strategies with the objective of revenue maximization, in the
limited supply setting, with subadditive buyer
valuations. In the unlimited supply setting, they designed a pricing strategy
that achieve revenue which is logarithmic approximation to the maximum
social welfare even for general valuations. The strategy, again, was a
uniform strategy. This result was also
proved independently in \cite{BHK08}. However, in the limited supply
setting, they could only get a $2^{O(\sqrt{\log n \log \log n})}$ factor
approximation using a static uniform strategy. Crucially, they ruled
out the existence of static uniform strategies that achieve anything
better than a $2^{\Omega((\log n)^{1/4})}$ approximation, even if the
buyer valuations are XOS, and the ordering of buyers is assumed to be
chosen uniformly at random. Thus their result distinguished the
limited and unlimited supply settings. This impossibility of getting a
good (polylogarithmic) approximation is a consequence of being
restricted to static uniform strategies, and it remains impossible
even if the seller knew the buyer valuations, and had unlimited
computational power. Further, almost all mechanisms in these
related problems have only used a single price for all items. It is,
therefore, natural to consider dropping one of these restrictions,
namely, look at dynamic uniform strategies and static non-uniform
strategies, both of which use multiple prices, and attempt to find
better guarantees on the revenue.

\paragraph{Our Results and Techniques}

The table below summarizes our results on the \IP problem in
the limited supply setting, along with relevant earlier work. Our contributions are labeled with the
relevant theorem numbers.

\begin{table}
\centering
\begin{minipage}{\textwidth}
\begin{tabular}{| p{2.9cm} || p{2.5cm} | p{2.9cm} || p{2.6cm} | p{3cm} |}
\hline
\multirow{2}{3.2cm}{{\bf Type of Pricing Strategy}} &
\multicolumn{2}{c||}{{\bf Subadditive buyer
  valuations}} & \multicolumn{2}{c|}{{\bf$\ell$-XOS buyer valuations}}\\
\cline{2-5}
  & {\bf Algorithm \footnotemark[1]}& {\bf Lower Bound
  \footnotemark[2]} & {\bf Algorithm \footnotemark[1]} &
{\bf Lower Bound \footnotemark[2]}\\
\hline
Dynamic Uniform Pricing & $O(\log^2 n)$ [Thm.
\ref{thm-dynamic}]  \footnotemark[3] & $\Omega\left(\left(\frac{\log n}{\log\log n}\right)^2\right)$
 [Thm. \ref{dyn-hardness}] & $O(\log^2 n)$
 [Thm. \ref{thm-dynamic}]  \footnotemark[3] & $\Omega\left(\left(\frac{\log
   n}{\log\log n}\right)^2\right)$ \ \ \ \ \ \ \ \ \ \ \ \ \ \ \ \ \
\ \ \ \ \ \ \ \ \ [Thm. \ref{dyn-hardness}] \\
\hline
Dynamic Monotone Uniform Pricing & $O(\log^2 n)$ [Thm. \ref{thm-dynmon}]
\footnotemark[4] \footnotemark[5]  & $\Omega\left(\left(\frac{\log n}{\log\log n}\right)^2\right)$
[Thm. \ref{dyn-hardness}] & $O(\log^2 n)$ [Thm.
\ref{thm-dynmon}]  \footnotemark[4] \footnotemark[5]  &
$\Omega\left(\left(\frac{\log n}{\log\log n}\right)^2\right)$ \ \ \ \
\ \ \ \ \ \ \ \ \ \ \ \ \ \ \ \ \
[Thm. \ref{dyn-hardness}] \\
\hline
Static Uniform  Pricing & $2^{O(\sqrt{\log n\log \log n})}$ [BBM08
\cite{BBM08}] \footnotemark[3]& $2^{\Omega({\sqrt{\log n}})}$
\ \ \ \ \ \ \ \ \ \ \ \ \ \ \ \ \ \ \ \ \ \ \ \ \
[Thm. \ref{stat-unif-lb}] & $2^{O(\sqrt{\log n\log \log n})}$
[BBM08 \cite{BBM08}] \footnotemark[3]& $2^{\Omega({\sqrt{\log n}})}$
\ \ \ \ \ \ \ \ \ \ \ \ \ \ \ \ \ \ \ \ \
[Thm. \ref{stat-unif-lb}] \\
\hline
Static Non-uniform Pricing & $2^{O(\sqrt{\log n\log \log n})}$ [BBM08
\cite{BBM08}] \footnotemark[3]& $\Omega\left(\log n\right)$ [BBM08
\cite{BBM08}] & $O\left(m\log l \log^3 n\right)$
[Thm. \ref{thm-non-uniform}] \footnotemark[3]& $\Omega\left(\log
  n\right)$
\ \ \ \ \ \ \ \ \ \ \ \ \ \ \ \ \ \ \ \ \ \ \ \ \ \ \ \ \ \ \ \
[BBM08 \cite{BBM08}]\\
\hline
\end{tabular}
\footnotetext[1]{All algorithms assume that the seller knows $\opt$ up
  to a constant factor. This assumption can be removed by worsening
  the approximation ratio by a factor of $\log \opt (\log\log \opt)^2$.}
\footnotetext[2]{All lower bounds are in the full information setting,
where the seller knows the buyers' valuations, the number and arrival
order of buyers, has unbounded computational power, and can even force
the arrival order of buyers!}
\footnotetext[3]{Buyers arrive in an adversarial order. Thus the
  algorithm satisfies the upper bound for any order of buyers,
  including the order that minimizes expected revenue.}
\footnotetext[4]{Buyers arrive in a uniform random order, that is, every
permutation of buyers is equally likely. The bound is on the expected
revenue under this assumption.}
\footnotetext[5]{This algorithm also assumes that the seller knows the
  number of buyers $m$ up to a constant factor, and it is
  deterministic. The assumption can be removed by making it
  randomized, and worsening the approximation ratio by a factor of
  $\log m(\log\log m)^2$.}
\end{minipage}
\label{summary-table}

\end{table}

We strengthen the hardness result of Balcan
et. al. \cite{BBM08} by constructing instances with XOS valuations
where uniform pricing functions cannot achieve any better than a
$2^{\Omega(\sqrt{\log n})}$ approximation, even if the seller knew the
buyer valuations, and had unlimited computational power. Our basic
construction is essentially similar to one given in \cite{BBM08}. We further extend our
construction so that all buyers have the same XOS valuation function,
so that the revenue is small for any order of buyers.
Alternatively, we can extend it so that every buyer has an XOS buyer
valuation (possibly different from the other buyers) that can be
expressed as the maximum of three additive functions.

In contrast, we design a simple randomized dynamic
uniform pricing strategy such that its expected revenue is $O(\log^2 n)$
approximation of the optimal social welfare, when the valuation
functions are subadditive. The strategy randomly chooses a
threshold at the beginning, and then in each round, randomly chooses a
price above the threshold (but less than $\opt$) and puts this price
on each unsold item. By using a fresh random price (from a suitable
set of prices) in each round, we guarantee, in expectation, to
collect a large fraction of the revenue that can be obtained in that
round from the remaining items.

The dynamic uniform pricing strategy described above achieves a high revenue, but
requires random fluctuation in the price of an unsold item. This may not be a desirable property in some
applications.
We design a dynamic monotone uniform pricing strategy where the price of any unsold items
only decreases over time. We show that if the ordering of buyers is assumed to
be uniformly random, that is, all permutations of buyers are equally
likely, then the
expected revenue is an $O(\log^2 n)$ approximation of
the optimal social welfare. The strategy is in fact deterministic
provided the seller knows estimates of $\opt$ and $m$ up to a
constant factor. Deterministic strategies giving good approximation in
such limited information settings are rare. We emphasize here that our
lower bound for static uniform pricing holds for any ordering of buyers.

We show that the performance of our dynamic uniform pricing strategies are
almost optimal among all dynamic uniform strategies, by showing that
even if the seller knew the buyer valuations, had unlimited
computational power, and could even force a particular ordering of the
buyers, there exists instances with XOS valuations where the
seller can achieve a revenue of at most $\opt (\log \log n)^2/\log^2
n$ if she is restricted to choosing a uniform dynamic pricing.

All our algorithms as well as the algorithms in \cite{BBM08} assume
that $\opt$ is known to the seller up to a constant factor. Moreover,
our dynamic monotone strategy assumes that the number of buyers $m$ is
known to the seller up to a constant factor. As Balcan
et. al. \cite{BBM08} pointed out, for any parameter that is assumed to
be known up to a constant factor, if the seller instead knows an upper
bound of $H$ and a lower bound of $L$ on
the optimum, then this assumption can be removed by guessing $\opt$
with a suitable distribution, worsening the approximation ratio by a
factor of $\Theta(H/L)$. If the seller instead knows that $\opt\geq
1$, but knows no upper bound, then the assumption can be removed by
worsening the approximation ratio by a factor of $\Theta(\log x (\log
\log x)^2)$ in the approximation, where $x$ is the said parameter.

Finally, we give a static non-uniform strategy that gives an $O(m\log
\ell \log^3 n)$-approximation if the buyers' valuations are XOS
valuations that can be expressed as the maximum of $\ell$ additive
components. Note that when the order of buyers is adversarial, the
hard instance for static uniform pricing has only two buyers, and
their valuation functions are the maximum of $o(\log n)$ additive
functions components. In particular, our strategy gives
polylogarithmic (in $n$) approximation when the number of buyers are
small (polylogarithmic in $n$), and has XOS valuations which are the
maximum of quasi-polynomial (in $n$) additive components. It is worth
noting that our lower bound for dynamic uniform strategies also
satisfies these properties.


\section{Preliminaries}\label{prelim}

In the \IP problem, we are given a single seller with a set $I$ of $n$ items
that she wishes to sell. There are $m$ buyers, each with their own {\em valuation
  function} defined on all subsets of $I$. A buyer with valuation
function $v$ values a subset of items $S\subseteq I$ at $v(S)$.
The buyers arrive in a sequence, and each buyer visits the
seller exactly once. The seller is allowed to
set a price on each item, and the price of a subset of items is the
sum of the prices of items in that subset. For every item sold to the
buyers, the seller receives the price of that item. Note that an item
can be sold at most once. So a seller can only offer those items to a
buyer that has not been sold to any previous buyer. The {\em revenue}
obtained by the seller is the sum of the prices of all the sold items.

Each buyer buys a subset of the items shown to her that maximizes her
{\em utility}, which is defined as the value of the subset minus
the price of the subset. This is clearly the behavior that is most
beneficial to the buyer. The \IP problem is to design
(possibly randomized) pricing strategies for the seller that maximizes
the expected revenue of the seller.

Unless noted otherwise, all our algorithmic results
will assume that the seller has no knowledge of the order of arrival of the buyers,
total number of buyers, or the valuation functions of buyers.
We refer to a setting as the {\em full-information setting}
if all these parameters are known to the seller.

\medskip
\noindent
{\bf Valuation Functions:}
Throughout this paper,  we will assume that the buyer valuation function $v$ is {\em
  subadditive}, which means that $v(S) + v(T) \geq v(S\cup T)\ \forall
S\subseteq I, T\subseteq I$.
{\em Unless explicitly stated otherwise, this will be the only assumption on the buyer
valuation functions.}
For some results, we shall assume the buyer valuations to be more
restrictive than subadditive.

\begin{definition}
A subadditive valuation function $v$ is called an {\em XOS valuation} if it can be
expressed  as $v(S) = \max \{a_1(S),a_2(S)\ldots a_{\ell}(S)\}\ \forall
S\subseteq I$ on all subsets of items $S$, where $a_1,a_2\ldots a_{\ell}$
are non-negative additive functions. The functions $a_1, a_2, ..., a_{\ell}$
are referred to as
the {\em additive valuation components} of the XOS valuation $v$.
We say that $v$ is an $\ell$-XOS function if it can be expressed
using at most $\ell$ additive valuation components.
\end{definition}

We note that a $1$-XOS function is simply an additive function, that
all XOS valuations are subadditive, and that
not all subadditive valuations can be expressed as XOS valuations.

\medskip
\noindent
{\bf Pricing Strategies:} We will study the power of some natural classes of
pricing strategies.

\begin{definition}
A pricing strategy is said to be {\em static} if the seller initially
sets prices on all items, and never changes the prices in the
future. A pricing strategy is said to be {\em dynamic} if the seller
is allowed to change prices at any point in time.
A dynamic pricing strategy is also said to be {\em monotone} if the
price of every item is non-increasing over time.
\end{definition}

\begin{definition}
A pricing strategy is said to be {\em uniform} if at all points in
time, all unsold items are assigned the same price.
\end{definition}

\subsection{Notation}

For a buyer $B$ with a valuation function $v$,
we use $\Phi(B,I,p)$ to denote a set of items that the buyer $B$ may
buy when presented with set $I$ of items, each of which are priced at $p$.
Since $v(S) - p|S|$ is the utility if the buyer buys
the set $S$, so $\Phi(B,I,p)=\text{argmax}_{S\subseteq J} v(S) -
p|S|$ maximizes the utility. Note that there may be multiple
possible sets that maximize the utility. In this paper, when we
make a statement involving $\Phi(B,I,p)$, the statement shall hold for
any choice of these sets.
We shall denote the maximum utility as
$\u(B,I,p)$; note that in contrast to $\Phi(B,I,p)$, the value $\u(B,I,p)$
is uniquely defined. When the underlying buyer $B$ is clear from the context, we shall
denote these two values as $\Phi(I,p)$ and $\u(I,p)$
respectively. Moreover, if the set of available items $I$ is also
clear from the context, then we shall denote these two values as
$\Phi(p)$ and $\u(p)$ respectively. For any set $S$ and a buyer with
valuation function $v$, we define $H_v(S)=\max_{S'\subseteq S} v(S')$
as the maximum utility the buyer can get if all items in $S$ are
offered to her at zero price.

\begin{definition}
We say that a set of items $S$ is {\em supported at a price $p$}
with respect to some buyer $B$ with valuation function $v$, if $B$
buys the entire set $S$ when the set $S$ is presented to $B$ at a
uniform pricing of $p$ on each item.
\end{definition}

The following lemma follows easily from the fact that the valuation
functions are subadditive, and was proved by Balcan
et. al. \cite{BBM08}. 
\begin{lemma}\label{unsold-value}
Let $S$ be a set of items that is supported at price $p$. with respect
to a buyer $B$ with valuation function $v$. Then $v(S')\geq p|S'|$ for
all $S'\subseteq S$.
\end{lemma}
\begin{proof}
Suppose not. Then there exists $S' \subset S$
with $v(S') < p|S'|$. By subadditivity of $v$, we know
$v(S') + v(S\setminus S') \ge v(S)$, and hence $v(S\setminus S') \ge v(S) - v(S')$.
Then the utility for $B$ of buying the set $(S \setminus S')$ is
at least $v(S) - v(S') - p|S/S'|$. But

$$\left( v(S) - v(S') - p|S/S'| \right) - p|S'| + p|S'| = \left( v(S) - p|S| \right) - v(S') + p|S'| > v(S) - p|S|,$$

contradicting the assumption that buyer picks set $S$ at price $p$.
\end{proof}

\subsection{Optimal Social Welfare and Revenue Approximation}

We now define optimal social welfare, the measure against which we
evaluate the performance of our pricing strategies.

\begin{definition}
An {\em allocation} of a set $S$ of items to buyers $B_1,B_2\ldots B_m$ with
valuations $v_1,v_2\ldots v_m$, respectively, is an $m$-tuple
$(T_1,T_2, \ldots ,T_m)$ such that $T_i \subseteq S$ for $1 \leq i \leq m$,
and $T_i \cap T_j =\emptyset$ for $1\leq i, j\leq m$.
The {\em social welfare} of an allocation is
defined as $\sum_{i=1}^m v_i(T_i)$, and an allocation is
said to be a {\em social welfare maximizing allocation}
if it maximizes $\sum_{i=1}^m v_i(T_i)$.
The {\em optimal social welfare}
$\opt$ is defined as the social welfare of a social welfare maximizing
allocation.
\end{definition}

Clearly, $\opt$ is an upper bound on the revenue that any
pricing strategy can get. Let $R$ be the revenue obtained by the
strategy, which is the sum of the amounts paid by all the buyers.

\begin{definition}
A pricing strategy is said to achieve an $\alpha$-approximation if the
expected revenue of the strategy $\E{R}$ is at least $\opt/\alpha$.
\end{definition}

Unless stated otherwise, the expected revenue is computed with
adversarial ordering of the buyers, that is, the ordering that
minimizes the expected revenue of the strategy. In other words, we
require a strategy to work well irrespective of the order of buyers in
which they arrive.

Note that $\opt$ is not a tight upper bound on the maximum revenue
that can be achieved by any pricing strategy, even with full knowledge
of buyer valuations and unbounded computational power. In fact, the
following example was given in Balcan et. al. \cite{BBM08}: if there
is a single buyer with valuation function $v(S)=\sum_{i=1}^{|S|} 1/i$,
then for any pricing of the $n$ items, the revenue is at most $1$,
while $\opt=\Theta(\log n)$. This shows that nothing better than a
logarithmic approximation can be achieved in the absence of any other
assumption on the buyer valuations.

\subsection{The Single Buyer Setting with Uniform Pricing Strategies}\label{borrowed-lemmas}

Balcan et. al. \cite{BBM08} considered the setting where there is an
unlimited supply of each item, so that no buyer is affected by items
bought before her arrival. In particular, if there is only a single
buyer, then there is no distinction between limited and unlimited
supply, as long as the buyer never wants more than one copy of the
same item. For the single buyer case, Balcan et. al. \cite{BBM08} gave
an $O(\log n)$ approximation, and in the process proved some lemmas
that will be useful for our algorithmic results in the limited supply
setting as well.

Suppose a set $S$ is being shown to a buyer $B$ with valuation
function $v$. The optimal social welfare in this single buyer
instance is $H_v(S)$. We consider setting a uniform price, that is,
the same price on all items. The following lemma states that the
number of items bought monotonically decreases as the price on the
items is increased. It was proved by Balcan et. al. \cite{BBM08}.

\begin{lemma}\label{decreasing-size}(Lemma 6 of \cite{BBM08})
Suppose a buyer $B$ is offered a set $S$ of items using a uniform
pricing.
Then for any $p > p' \ge 0$, if $B$ buys
$\Phi(p)$ if all items are priced at $p$, and $\Phi(p')$ if all items
are priced at $p'$, then $|\Phi(p)|\leq |\Phi(p')|$.
Thus there
exist prices $\infty=q_0>q_1>\ldots>q_l>q_{l+1}=0$ and
integers $0=n_0<n_1<\ldots<n_l\leq |S|$ such that when items in $S$ are
uniformly priced at $p \in [q_{t+1}, q_t)$ there is a subset $S'
\subseteq S$ of size $n_t$ that is supported at price $p$, and
the utility $\u(p)$ of the buyer $B$ satisfies
\begin{equation} \label{netutil-eqn}
 \u(p)=\u(q_{t})+n_t(q_{t}-p)
\end{equation}
\end{lemma}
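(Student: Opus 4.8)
The plan is to first establish the monotonicity claim by a standard exchange argument, and then to read off the breakpoint structure and the utility formula from the fact that $\u(\cdot)$ is an upper envelope of finitely many lines.

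First I would prove $|\Phi(p)|\le|\Phi(p')|$ for $p>p'\ge 0$. Since $\Phi(p)$ maximizes the buyer's utility at price $p$ and $\Phi(p')$ maximizes it at price $p'$, we have $v(\Phi(p))-p|\Phi(p)|\ge v(\Phi(p'))-p|\Phi(p')|$ and $v(\Phi(p'))-p'|\Phi(p')|\ge v(\Phi(p))-p'|\Phi(p)|$; adding these and cancelling the valuation terms gives $(p-p')\big(|\Phi(p')|-|\Phi(p)|\big)\ge 0$, so $|\Phi(p')|\ge|\Phi(p)|$. This uses only optimality of the buyer's choice (not subadditivity) and holds for every choice of the possibly non-unique maximizing sets.

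Next I would analyze $\u(p)=\max_{T\subseteq S}\big(v(T)-p|T|\big)$, viewed as the pointwise maximum over the finitely many subsets $T$ of the affine functions $p\mapsto v(T)-p|T|$. Hence $\u$ is continuous, non-increasing, convex, and piecewise linear with finitely many pieces; on the interior of a linear piece it is differentiable with slope $-k$, and a short argument shows any maximizing set there has size exactly $k=|\Phi(p)|$ (an affine $g_T\le\u$ touching $\u$ at an interior point of a linear piece must coincide with that piece, so $|T|=k$). For $p$ above $\max_{\emptyset\ne T\subseteq S}v(T)/|T|$ the empty set is the unique maximizer, so the top piece has slope $0$; at $p=0$ a maximum-value subset of size $\le|S|$ is bought. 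Ordering the pieces by increasing $p$, the monotonicity just proved makes the successive slopes $-n_l<-n_{l-1}<\cdots<-n_1<-n_0=0$ for integers $0=n_0<n_1<\cdots<n_l\le|S|$; setting $q_0=\infty$, $q_{l+1}=0$, and letting $q_1>\cdots>q_l$ be the breakpoints with $q_t$ separating slope $-n_{t-1}$ above from $-n_t$ below, the piece of slope $-n_t$ occupies the half-open interval $[q_{t+1},q_t)$. Writing $S_t\subseteq S$ for a subset of size $n_t$ whose affine function agrees with $\u$ on this interval, $S_t$ is a utility-maximizing (hence supported) set of size $n_t$ at every $p\in[q_{t+1},q_t)$, including at the left endpoint by continuity.

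Finally, on $[q_{t+1},q_t)$ we have $\u(p)=v(S_t)-p\,n_t$ and, by continuity, $\u(q_t)=v(S_t)-q_t n_t$; subtracting gives $\u(p)=\u(q_t)+n_t(q_t-p)$, establishing (\ref{netutil-eqn}). Rather than a genuine obstacle, the part requiring care is the bookkeeping at the breakpoints: one must verify that a supported set of size exactly $n_t$ (and not just some smaller utility-maximizing set) exists at the left endpoint $p=q_{t+1}$, where the maximizing size is not unique, and dispose of the degenerate case $l=0$; the rest is immediate from the envelope description of $\u$.
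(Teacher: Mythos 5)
The paper does not prove this lemma; it cites it directly as Lemma~6 of \cite{BBM08} (``It was proved by Balcan et.\ al.\ \cite{BBM08}''), so there is no in-paper proof to compare your argument against. Your blind proof is correct and is the natural one: the monotonicity $|\Phi(p)|\le|\Phi(p')|$ follows from the classical exchange/rearrangement inequality applied to the two optimality conditions and, as you note, needs no subadditivity at all; the breakpoint structure and formula~(\ref{netutil-eqn}) then follow from viewing $\u(p)=\max_{T\subseteq S}\bigl(v(T)-p\,|T|\bigr)$ as the upper envelope of finitely many affine functions with integer slopes $-|T|$, which makes $\u$ convex, non-increasing, piecewise linear with a strictly increasing integer slope sequence $-n_l<\cdots<-n_1<-n_0=0$. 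Your handling of the half-open intervals $[q_{t+1},q_t)$ is also right: the set whose affine function is active on the interior of a piece remains utility-maximizing (hence supported) at the left endpoint by continuity of the envelope, and evaluating that affine piece at $p$ and at $q_t$ gives $\u(p)=\u(q_t)+n_t(q_t-p)$. One small remark: your exchange inequality holds simultaneously for every choice of optimizers at $p$ and $p'$, and the lemma's conclusion is existential (``there is a subset of size $n_t$ that is supported''), so no assumption on the buyer's tie-breaking is needed; your proof delivers exactly that. This is the standard argument for demand monotonicity and indirect-utility envelopes and is, to the best of my knowledge, the same route taken in \cite{BBM08}.
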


Since the empty set maximizes utility when the price is $q_1$, we
get that $\u(q_1)=0$. Moreover, the utility at price $q_{l+1}=0$
is $\u(q_{l+1})=H_v(S)$. Thus we get that $H_v(S)=\sum_{t=1}^l n_t
(q_t - q_{t+1})$.

The following lemma is a slight variation of Lemma 8 of
\cite{BBM08}.

\begin{lemma}\label{BBM-revsum}
Suppose a set $S$ is being shown to a buyer $B$, with valuation
function $v$, using a uniform price. Let $H'$ be any number such that
$H'\geq H_v(S)$.  Let $\gamma >1$, and let $p[t]=H'/\gamma^t$. Then,
for any $k\geq 0$, we have $\sum_{t=1}^k p[t]|\Phi(p[t])| \geq
\frac{1}{\gamma-1} \left( H_v(S) - \frac{|S|H'}{\gamma^k} \right)$.
\end{lemma}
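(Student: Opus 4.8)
The plan is to use the structure from Lemma~\ref{decreasing-size}, which gives breakpoints $\infty=q_0>q_1>\ldots>q_l>q_{l+1}=0$ and sizes $0=n_0<n_1<\cdots<n_l$, together with the identity $H_v(S)=\sum_{t=1}^l n_t(q_t-q_{t+1})$ noted right after that lemma. The key observation is that for each query price $p[t]=H'/\gamma^t$, the quantity $|\Phi(p[t])|$ equals $n_{j}$ where $j$ is the index with $p[t]\in[q_{j+1},q_j)$, so $p[t]\,|\Phi(p[t])|$ is exactly the ``revenue at price $p[t]$'' in the sense of the supported set. I would compare the sum $\sum_{t=1}^k p[t]|\Phi(p[t])|$ against the telescoping expression for $H_v(S)$ by a charging argument: the contribution $n_j(q_j-q_{j+1})$ from the $j$-th step of $H_v(S)$ should be charged to the largest query price $p[t]$ that does not exceed $q_j$ (equivalently, that lies in $[q_{j+1},q_j)$ or below).

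First I would set up the geometric covering. Since the $p[t]$ form a geometric sequence with ratio $\gamma$, consecutive query prices satisfy $p[t-1]=\gamma\,p[t]$. For each breakpoint interval $[q_{j+1},q_j)$, let $t_j$ be the smallest $t\ge 1$ with $p[t]<q_j$ (so $p[t_j]\le$ the top of that interval). Then for every integer $r\ge 0$, the price $p[t_j+r]$ is at most $q_j/\gamma^{r}$, and $|\Phi(p[t_j+r])|\ge n_j$ as long as $p[t_j+r]\ge q_{j+1}$, because sizes only increase as price drops (Lemma~\ref{decreasing-size}). This lets me lower-bound $\sum_t p[t]|\Phi(p[t])|$ by $\sum_j n_j \sum_{r\ge 0} p[t_j+r]$ restricted to those $r$ with $p[t_j+r]\ge q_{j+1}$, and then bound each geometric tail $\sum_{r\ge 0} p[t_j+r]\ge \frac{1}{\gamma-1}(q_j - q_{j+1})$ up to the truncation correction --- the extra $\frac{1}{\gamma}$ factors from $p[t_j]\le q_j$ versus $=q_j$ work out because $\sum_{r\ge0}\gamma^{-r}$ overshoots by exactly the needed margin. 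Summing over $j$ recovers $\frac{1}{\gamma-1}\sum_j n_j(q_j-q_{j+1}) = \frac{1}{\gamma-1}H_v(S)$, minus the part of the telescoping sum corresponding to prices below $p[k]$.

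The main obstacle I anticipate is bookkeeping the truncation at $k$ cleanly: the bound only sums $t$ from $1$ to $k$, so any step $j$ of $H_v(S)$ whose interval $[q_{j+1},q_j)$ lies (partly or wholly) below $p[k]=H'/\gamma^k$ is not fully paid for. I would handle this by splitting $H_v(S)=\sum_j n_j(q_j-q_{j+1})$ into the part at prices $\ge p[k]$ and the part at prices $< p[k]$, and crudely bounding the latter: every $n_j\le |S|$, and the breakpoints in the low range telescope to at most $p[k]=H'/\gamma^k$, giving a deficit of at most $|S|\cdot H'/\gamma^k$. Combined with the geometric-tail computation for the high range, this yields exactly $\sum_{t=1}^k p[t]|\Phi(p[t])|\ge \frac{1}{\gamma-1}\bigl(H_v(S)-|S|H'/\gamma^k\bigr)$. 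The only mildly delicate point is making sure the ``smallest $t$ with $p[t]<q_j$'' is well-defined and $\le k$ for the intervals we care about, and that edge cases (an interval straddling $p[k]$, or $H'>H_v(S)$ strictly so that the first few query prices buy nothing) are absorbed into the slack --- all of which is routine once the charging scheme is fixed.
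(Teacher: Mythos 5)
Your high-level plan matches the paper's: treat $H_v(S)=\sum_j n_j(q_j-q_{j+1})$ as the integral of the non-increasing step function $f$ and compare it to the revenue sum over the geometric grid $\{p[t]\}$. But the per-interval charging scheme you set up has a genuine gap. You claim that for each breakpoint interval $[q_{j+1},q_j)$, the query prices that land in it sum to at least $\tfrac{1}{\gamma-1}(q_j-q_{j+1})$ up to truncation. This is false. If $a_j$ is the first index with $p[a_j]<q_j$ and $b_j$ the last with $p[b_j]\ge q_{j+1}$, the restricted geometric sum equals $\tfrac{p[a_j-1]-p[b_j]}{\gamma-1}$. You correctly get $p[a_j-1]\ge q_j$ from the ``extra $1/\gamma$'' overshoot at the top, but nothing compensates at the bottom: $p[b_j]$ can be anywhere up to just below $\gamma q_{j+1}$, so the sum can fall short of $\tfrac{q_j-q_{j+1}}{\gamma-1}$ by up to $q_{j+1}$. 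Worse, if $q_j/q_{j+1}<\gamma$ the interval may contain no query price at all: with $\gamma=2$ and a breakpoint interval $[0.7,0.9)$ lying strictly between consecutive grid prices $0.5$ and $1$, the per-interval contribution is $0$ even though your bound demands at least $0.2$. This deficit is unrelated to the $|S|H'/\gamma^k$ slack you reserve for truncation at $k$; it persists for arbitrarily large $k$ and is what your ``routine edge cases'' remark fails to cover.

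The way to make the idea rigorous, and what the paper actually does, is to run the Riemann-sum comparison globally rather than interval by interval. Since $f$ is non-increasing, evaluating it at the left endpoint of each grid cell gives an upper sum: $\int_{p[k]}^{p[0]} f \le \sum_{t=1}^k (p[t-1]-p[t])\,f(p[t]) = (\gamma-1)\sum_{t=1}^k p[t]|\Phi(p[t])|$. Combining this with $\int_0^{p[k]} f\le |S|\,p[k]$ and $\int_0^{p[0]} f\ge \int_0^{q_1}f=H_v(S)$ yields the lemma directly. In this formulation a narrow breakpoint interval that misses the grid is automatically paid for by the first query price below it, which carries a larger $f$-value --- exactly the cross-interval compensation that your per-interval decomposition discards.
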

\begin{proof}
Since $H_v(S)= \sum_{t=1}^l n_t(q_t - q_{t+1})$, it can
be seen as an
integral of the following step function $f$ from $q_{l+1}=0$ to $q_1$:
in the range $[q_{t+1},q_t)$, $f$ takes the value $n_t$. So we can upper
bound $H_v(S)$ by an upper integral of $f$. Note that $|f(p)\leq
\Phi(p)|\leq |S|$, and also that $f$ is a decreasing function. Since
$p[0] =H' \geq H \geq q_1$, we get
\begin{equation*}
\begin{aligned}
H_v(S) &= \sum_{t=1}^l n_t(q_t - q_{t+1})  = \int_0^{q_1} f(x)\d x\\
& \leq \int_0^{p[k]} f(x)\d x + \sum_{t=0}^{k-1}
(p[t]-p[t+1])f(p[t+1])\\
& \leq \int_0^{p[k]} |S|\d x + \sum_{t=0}^{k-1} (\gamma - 1)
p[t+1]|\Phi(p[t+1])|\\
&= (\gamma - 1) \sum_{t=1}^k p[t]|\Phi(p[t])| + |S|p[k]
\end{aligned}
\end{equation*}
Thus we get that $\sum_{t=1}^k p[t]|\Phi(p[t])| = \frac{1}{\gamma-1}
\left( H_v(S) - |S|H'/\gamma^k \right)$.
\end{proof}

Briefly, Lemma \ref{BBM-revsum} will be used as follows: if one of
$\{H',H'/2,H'/4 \ldots H'/2^k\}$ is chosen uniformly at random and set
as the uniform price for all items in $S$, then for a sufficiently
large choice of $k$, the revenue obtained is $\Omega(H_v(S)/k)$. This
will happen when the right-hand-side of the equation in the lemma
evaluates to $\Omega(H_v(S))$. We shall frequently use this lemma, and
with $H'=\Theta(H)$, our choice of $k$ will be logarithmic in the
number of items.

\subsection{Optimizing with Unknown Parameters}

Almost all our algorithms use the following lemma, which was implicitly
mentioned in the Appendix of \cite{BBM08}. It tells us that
strategies can be allowed to assume that it approximately knows the
value of some parameters, as long as the parameters are not too large,
since these assumptions can be removed by guessing the value of these
parameters and getting it correct with inverse-polylogarithmic
probability. The lemma below is applicable to the \IP problem
with multiple buyers, and to both static and dynamic pricing
strategies.

\begin{lemma}\label{param-assumption}
Consider a pricing strategy $\S$ that gives an $\alpha$-approximation in
expected revenue, provided the seller knows the
value of some parameter $x$ to within a factor of $2$. Then if the seller
instead only knows that $L\leq x < H$, there exists
a pricing strategy $\S'$ that gives an $O(\alpha
\log (H/L))$ approximation in expected revenue,
where $L$ and $H$ are powers of $2$. If the seller instead
only knows that $x\geq 1$ but no upper
bound, then for any constant $\eps > 0$, there exists a pricing strategy $\S''$
that gives an $O(\alpha \log x (\log \log x)^{1+\eps})$
approximation in expected revenue.
\end{lemma}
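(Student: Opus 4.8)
The plan is to handle the two parts of the lemma separately, each by a standard "guess the parameter with an appropriate distribution and restart the strategy with that guess" reduction. In both cases, we construct $\S'$ (resp.\ $\S''$) by first sampling a guess $\tilde{x}$ for the unknown parameter $x$ from a carefully chosen distribution over powers of $2$, and then running $\S$ as if $\tilde{x}$ were a correct (up-to-factor-$2$) estimate of $x$. Since $\S$ only needs $x$ to within a factor of $2$, whenever our sampled $\tilde{x}$ lands in $[x/2, x]$ (equivalently, whenever $\tilde{x}$ equals the unique power of $2$ in a suitable dyadic window around $x$), $\S$ achieves an $\alpha$-approximation conditioned on that event; in all other cases we simply lower-bound the conditional revenue by $0$. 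Hence $\E{R_{\S'}} \ge \Pr[\tilde{x} \text{ is a good guess}] \cdot \opt/\alpha$, and the whole problem reduces to choosing a distribution on powers of $2$ that puts probability at least $\Omega(1/\log(H/L))$ (resp.\ $\Omega(1/(\log x (\log\log x)^{1+\eps}))$) on whichever single dyadic value turns out to be good.

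\textbf{Part 1 (bounded case).} When $L \le x < H$ with $L, H$ powers of $2$, the good guess is one of the $\log_2(H/L)$ powers of $2$ in $[L, H)$. Taking $\tilde{x}$ uniform over these $\log_2(H/L)$ values makes the good guess occur with probability exactly $1/\log_2(H/L)$, so $\E{R_{\S'}} \ge \opt/(\alpha \log_2(H/L))$, giving the claimed $O(\alpha \log(H/L))$ approximation. This part is routine.

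\textbf{Part 2 (unbounded case).} Here the only obstacle: we must put a probability distribution on all powers of $2$ that are at least $1$ (equivalently on $j \in \{0,1,2,\dots\}$ with $\tilde{x} = 2^j$) such that, for the particular good index $j^\star \approx \log_2 x$, the mass on $j^\star$ is $\Omega\big(1/(\log x (\log\log x)^{1+\eps})\big) = \Omega\big(1/(j^\star (\log j^\star)^{1+\eps})\big)$. This is exactly what a distribution with $\Pr[\tilde{x} = 2^j] \propto \frac{1}{(j+2)(\log(j+2))^{1+\eps}}$ accomplishes: the normalizing constant $\sum_{j\ge 0} \frac{1}{(j+2)(\log(j+2))^{1+\eps}}$ converges (for any fixed $\eps > 0$, by the integral test, since $\int^\infty \frac{dt}{t (\log t)^{1+\eps}} < \infty$) to some constant $C_\eps$, and therefore $\Pr[\tilde{x} = 2^{j^\star}] = \Omega\big(1/(j^\star (\log j^\star)^{1+\eps})\big)$. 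Running $\S$ with this $\tilde{x}$ and bounding revenue by $0$ on the bad event yields $\E{R_{\S''}} \ge \Omega\big(\opt /(\alpha \log x (\log\log x)^{1+\eps})\big)$, i.e.\ an $O(\alpha \log x (\log\log x)^{1+\eps})$ approximation.

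\textbf{The main point to be careful about} is the reduction's validity for \emph{dynamic} strategies and multiple buyers: we must note that the guess $\tilde{x}$ is drawn \emph{once} at the start and then $\S$ is simulated verbatim, so any adversarial ordering of buyers against $\S'$ is also an adversarial ordering against $\S$ for the fixed realized value $\tilde{x}$; thus the conditional guarantee of $\S$ (which holds for worst-case order) transfers. The only other subtlety is the harmless indexing shift (using $j+2$ rather than $j$) so that $\log(j+2)$ is bounded away from $0$ and the series terms are well-defined and positive for all $j \ge 0$; this does not affect the asymptotics. The rest is the elementary convergence estimate for the normalizing constant, which I would not spell out in detail.
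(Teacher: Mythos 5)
Your proposal is correct and takes essentially the same approach as the paper: sample a dyadic guess for the unknown parameter at the start (uniform over $\{L,2L,\ldots,H\}$ in the bounded case, and with mass $\propto 1/(j\log^{1+\eps}j)$ over powers $2^j$ in the unbounded case), run $\S$ with that guess, and lower-bound the expected revenue by the probability of a correct guess times the conditional guarantee. The only difference is cosmetic: your $(j+2)$ shift cleanly avoids the $\log 1 = 0$ issue at the low end of the series, which the paper's notation glosses over, and you make explicit the (correct) observation that the per-realization worst-case-order guarantee of $\S$ transfers to $\S'$.
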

\begin{proof}
We construct a pricing strategy by approximately guessing the
value of $x$, up to the nearest power of $2$, using a suitable
distribution, at the beginning, and using this estimate in the given
pricing strategy. Our revenue is assured only when our guess is
correct, and we count only that revenue in our analysis.

In the case where $L$ and $H$ are given, we guess $x$ from the set
$\{L,2L,4L\ldots H\}$, so that our guess of $x$ is correct within a
factor of $2$, with probability
at least $\Omega(1/\log (H/L))$. Since the given pricing strategy
gives an expected revenue of $\Omega(\opt/\alpha)$ when the guess is
correct within a factor of $2$, we get an expected revenue of at least
$\Omega(\opt/\alpha \log (H/L))$.

In the second case, where the seller only knows that $x\geq 1$, then
we guess that $x=2^i$ with probability $\frac{1}{c (i\log^{1+\eps} i)}$, where
$c=\sum_{i=1}^{\infty} 1/i\log^{1+\eps} i$, which is finite. If $x$ is
between $2^i$ and $2^{i+1}$, then the probability
of guessing $x$ correctly within a factor of $2$ is at least
$\Omega(1/i \log^{1+\eps} i) = \Omega(1/(\log x(\log\log
x)^{1+\eps}))$, so the expected revenue is at least
$\Omega(\opt/(\alpha \log x (\log\log x)^{1+\eps}))$.
\end{proof}

\section{Improved Lower Bounds for Static Uniform Pricing}\label{stat-lb-section}
We show some lower bounds for static uniform
pricing. The core of our construction is the same as the lower bound
construction in Balcan et. al. \cite{BBM08}, but with improved
parameters, and our lower bound almost matches the upper bound in
\cite{BBM08}. We are also able to strengthen our construction to the
case of identical buyers as well as to the case where each buyer uses
simple XOS functions with only 3 additive components.
The following theorem summarizes our lower bound results about static
uniform pricing. 

\begin{theorem}\label{stat-unif-lb}
There exists a set of buyers with XOS valuations, such that if the
seller is restricted to a static uniform pricing strategy, then
even in the full information setting, for any choice of price,
the revenue produced is at most
$\opt/2^{\Omega(\sqrt{\log n})}$, where $n$ is the number of
items. Additionally, one of the following (but not both) can also be ensured, with the revenue still being at most
$\opt/2^{\Omega(\sqrt{\log n})}$:
\begin{itemize}
\item The valuations of all the buyers can be expressed as 3-XOS
	functions.
\item All buyers have identical valuation function.
\end{itemize}
\end{theorem}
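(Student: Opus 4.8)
The plan is to exhibit the hard instance recursively, taking the construction of Balcan et al.~\cite{BBM08} as the core building block but re-choosing the quantitative parameters that drive the recursion, and then overlaying the two optional strengthenings. A level-$\ell$ gadget $G_\ell$ is built from several item-disjoint rescaled copies of $G_{\ell-1}$ together with one extra ``coarse'' buyer whose valuation is a capped additive function (hence XOS) operating at a price scale well separated from every scale occurring inside $G_{\ell-1}$; the base gadget $G_1$ is a single harmonic-type buyer, exactly the example witnessing the unavoidable logarithmic loss recalled in Section~\ref{prelim}. Writing $W_\ell$ for the optimal social welfare of $G_\ell$, $R_\ell$ for the largest revenue any uniform price extracts from $G_\ell$ under \emph{any} arrival order, and $n_\ell$ for its number of items, the whole argument reduces to proving by induction that (i)~$W_\ell/R_\ell \ge \beta\cdot W_{\ell-1}/R_{\ell-1}$ for some absolute constant $\beta>1$ and (ii)~$n_\ell \le 2^{O(\ell)} n_{\ell-1}$; iterating $k$ times then yields, on $n=n_k=2^{O(k^2)}$ items, an instance with $\opt/R_k = 2^{\Omega(k)} = 2^{\Omega(\sqrt{\log n})}$. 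The gain over~\cite{BBM08} is entirely in inequality~(ii): keep the per-level welfare-to-revenue amplification constant while shrinking the per-level blow-up in the number of items, so that the same gap is realized on far fewer items.

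Inequality~(i) splits into a welfare lower bound and a revenue upper bound. For the welfare bound one writes down an explicit allocation — run each copy of $G_{\ell-1}$ under its own optimal allocation, and hand the coarse buyer a fresh disjoint block of items — and checks that its social welfare is the claimed $W_\ell$; the coarse buyer is there precisely to push $W_\ell$ a constant factor above (number of copies)$\,\times W_{\ell-1}$. For the revenue bound, fix a uniform price $p$ and an arbitrary order and case on where $p$ falls relative to the coarse buyer's scale: if $p$ is at or below that scale, the coarse buyer — wanting only its capped number of items — together with the copies can absorb only a bounded number of items, and $p$ times that number is at most $R_\ell$; if $p$ is above that scale the coarse buyer is inactive and no buyer reaches across copies, so the revenue is at most (number of copies)$\,\times R_{\ell-1}$ by induction. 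The reason the bound survives both an adversarial and a seller-chosen order (the full-information setting lets the seller even pick the arrival order) is that every valuation is capped additive, so at any fixed price each buyer demands a bounded number of items and the total number of items sold, hence the revenue, is essentially order-independent.

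For the two refinements: to make every buyer $3$-XOS, observe that the only reason a valuation in $G_k$ needs many additive components is the cap $\min(\abs S,M)$; but inside $G_k$ each buyer is ever offered items only from a fixed nested family of sets, and on that family the capped function coincides with the maximum of a constant number of explicitly chosen additive functions, so replacing each valuation by this three-component surrogate leaves both inductive claims intact. To make all $m$ buyers identical, replace every buyer's valuation by the pointwise maximum of all valuations occurring in $G_k$; this is again XOS (a maximum of XOS functions is XOS), $\opt$ can only increase so the welfare bound is unaffected, and at any fixed price a super-buyer's demand is the demand of one of the original buyers that are active at that price, so the per-price, per-order contention count behind the revenue bound goes through. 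These two refinements are mutually exclusive, since the identical super-buyer provably cannot be written with few additive components.

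The main obstacle is the revenue upper bound, and within it the calibration of scales. The recursion amplifies the gap only if a buyer tuned to one scale cannot profitably grab items meant for another scale; pushing the scales apart far enough to kill this ``leakage'' inflates the item count, so the crux — and the source of the parameter improvement — is separating consecutive scales by the least amount that still defeats leakage and then proving a clean, order-free bound on the number of items sold at every price. The identical-buyers strengthening is the most delicate point to write up, since one must verify that collapsing all buyers to a single maximally powerful valuation creates no new way to sell items at any price.
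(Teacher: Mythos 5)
Your recursion — item-disjoint copies of $G_{\ell-1}$ plus one coarse buyer on a fresh disjoint block — cannot produce the claimed gap amplification, because it has no coupling between buyers. If the coarse buyer's items are disjoint from the copies' items, then at any single uniform price the instance decomposes as a disjoint sum, and both $\opt$ and the best single-price revenue add over the parts; the welfare-to-revenue gap of a disjoint union never exceeds the maximum of the parts' gaps. The engine behind the paper's $2^{\Omega(\sqrt{\log n})}$ bound is a \emph{blocking} mechanism on \emph{shared} items: buyer~1's XOS valuation is supported only on the high-density subsets $S_i'\subseteq S_i$, and at any price buyer~1 grabs the $T'_j$ she likes best, which are exactly the items buyer~2 would have paid the most for, leaving buyer~2 only the diffuse low-revenue remainder $T_{j+1}$. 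That interaction across two buyers at every scale is what drives the revenue at every price down to $O(\opt/X)$. Your ``coarse buyer on disjoint items'' simply adds welfare and adds revenue in proportion; there is nothing for it to block, so the inductive step~(i) does not hold as stated.

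Your justification for order-robustness is also incorrect: in the limited-supply setting the revenue is emphatically not order-independent, precisely because the buyers compete for the same items. The paper has to work for this — Instance~1 replicates the blocking buyer $m-1$ times (so the blocker arrives first with high probability), Instance~2 replicates the item universe so that every ordering has a blocker already in place, and Instance~3 then takes a pointwise max of those valuations. Your version of the identical-buyers step (take the pointwise max of all valuations) skips the replication and does not address the fact that several super-buyers may all want to inhabit the same additive component simultaneously. Finally, the remark that a capped additive function ``coincides with the maximum of a constant number of explicitly chosen additive functions'' on the relevant nested family would need proof; the paper instead achieves $3$-XOS directly by partitioning the $6k+3$ scales into three residue classes mod~3 and making each class one additive component, which is a concrete and quite different device. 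In short, the quantitative shape of your bookkeeping ($\beta^k$ gap on $2^{O(k^2)}$ items) is the right target, but the construction you describe neither realizes the per-level gain nor supports the order-robustness and $3$-XOS claims.
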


We now present the proof of Theorem \ref{stat-unif-lb}. We first
construct an instance with two buyers whose valuations
consist of only three additive components each, such that if buyer 1
arrives before buyer 2, then the revenue obtained will satisfy the
required upper bound. This part of our construction is very similar to
that given in \cite{BBM08}, with some changes in the parameter that
allows us to improve the lower bound result from $2^{\log^{1/4} n}$ to
$2^{\Omega(\sqrt{\log n})}$. We shall then extend this construction to
instances where no ordering of buyers gives a high revenue, where all
buyers have identical valuations and finally where all buyers have
3-XOS valuations.

\medskip

\paragraph{A hard two-player instance:}
Let $X > 1$ and $Y < 1$ be two parameters that shall be fixed later.
Consider an instance of the problem as described below. Let $n_o$ be
a positive parameter, and as we shall see, the number of items will
be between $n_0$ and $2n_0$. There are two buyers with buyer
valuations $v_1$ and $v_2$. Let $S_0,S_1,\cdots,S_{6k+2}$ be a
partition of items, where $k=\floor{\sqrt{\log n_0}/3}$. There is
a subset $S'_i\subseteq S_i$ of items of high valuation for each $i$.
$|S_i| = n_0/X^i $ and $|S'_i| =
n_0/X^{i+1} $. Buyer 1 does not value the items in $S_i\setminus
S_i'$, that is, $v_1(S_i\setminus S'_i) = 0$. Buyer 1 values the
items in $S_i'$ equally, such that $v_1(S'_i) = cY^i$.
Similarly, buyer 2 values the items in $S_i'$ equally, such that
$v_2(S'_i) = Y^i-Y^{i+1}$, and values the items in $S_i\setminus S'_i$
equally such that $v_2(S_i\setminus S'_i) = Y^{i+1}$. Finally, the
valuation function of each buyer consists of {\em three additive
components} which are additive inside the set $S_0\cup
S_3\cup\cdots\cup S_{6k}$, $S_1\cup S_4\cup\cdots\cup S_{6k+1}$, and
$S_2\cup S_5\cup\cdots\cup S_{6k+2}$ respectively.
Here $Y$ is a constant $1/2$, $X$ equals $(1/Y)^k=2^{\theta(\sqrt{\log{n}})}$,
so $X^{6k}=(1/Y)^{6k^2}>2^{-\log n_0}=1/n_0$.
And $c$ is a parameter to be determined later. We shall
show that in this instance, {\em if buyer 1 comes before buyer 2, we
achieve the required lower bound on the revenue, with an appropriate
choice of $c$}.

Below, $u_i(S)$ denotes the utility of buyer $i$ at price $p$ when
she buys the set $S$.
For convenience, we let $T_i$ denote $\cup_{j\in \zeta_i} S_j$, where
$\zeta_i = \{j| 6k+2\geq j\geq i,\ (j-i) \text{ is divisible by }
3\}$. Similarly we define $T'_i$ as  $\cup_{j\in \zeta_i} S_j'$. We
have the following when
$j\in\{1,2\}$ and $i\leq 3k$:
\begin{equation*}
\begin{aligned}
    v_j(T_i)&\geq\sum_{\ell=0}^{k}v_j(S_{i+3\ell})\\
    &=v_j(S_i)(1+Y^3+\cdots+Y^{3k})\\
	&=\left(1-o\left(\frac{1}{X^2}\right)\right)\frac{v_j(S_i)}{1-Y^3}\\
	p|T_i|&\leq p\sum_{\ell=0}^{\infty}\frac{n_0}{X^{i+3\ell}}\\
	&= p|S_i|\sum_{\ell=0}^{\infty}(\frac{1}{X^{3j}})\\
	&= \left(1+o\left(\frac{1}{X^2}\right)\right)p\size{S_i}
\end{aligned}
\end{equation*}
It is also clear that
$v_j(T_i)<\sum_{\ell=0}^{\infty}v_j(S_i)Y^{3\ell}=v_j(S_i)/(1-Y^3)$ and
$p|T_i|>p|S_i|$. So we have $v_j(T_i)=(1\pm
o(1/X^2))v_j(S_i)/(1-Y^3)$ and $p|T_i|=(1\pm o(1/X^2))p|S_i|$.
Similarly, we also have $v_j(T'_i)=(1\pm o(1/X^2))v_j(S'_i)/(1-Y^3)$
and $p\size{T'_i}=(1\pm(1/X^2))p\size{S'_i}$. Use these facts we get that
\begin{equation*}
\begin{aligned}
    v_2(T_{i+1})&=(1\pm o(1/X^2))v_2(S_{i+1})/(1-Y^3)\\
    &=(1\pm o(1/X^2))(v_2(S_{i})-v_2(S'_{i}))/(1-Y^3)\\
    &=(1\pm o(1/X^2))(v_2(T_i)-v_2(T'_i))\\
    &=(1\pm o(1/X^2))v_2(T_i\setminus T'_i)
\end{aligned}
\end{equation*}
Therefore, when the price $p$ is non-trivially large, that is,
$p|T_i\setminus T'_i|>1/X^2\geq v_2(T_i\setminus T'_i)/X^2$,
we have $u_2(T_{i+1})>u_2(T_i\setminus T'_i)$ since both sets have
essentially the same valuation, and the former has significantly fewer
items and hence costs less.
The following facts will be useful for the rest of the proof. There exists
$a_i$, $b_i$, and $c_i$ such that:
\begin{equation*}
\begin{aligned}
	u_1(T'_{i+1})>u_1(T'_i)&\Leftrightarrow p>a_i=
	\frac{(1\pm O(1/X))cX^{i+1}Y^i}{n_0(1+Y+Y^2)}\\
    u_2(T_{i+1}\setminus T'_{i+1})>u_2(T_i)&\Leftrightarrow p>b_i=
	\frac{(1\pm O(1/X))X^iY^i(1+Y)}{n_0(1+Y+Y^2)}\\
	p|T_i|>1/X&\Leftrightarrow p>c_i=\frac{X^{i-1}}{n_0}
\end{aligned}
\end{equation*}
We shall ensure the following
constraints:
\begin{eqnarray}
    u_1(T'_{i+1})>u_1(T'_i) &\Rightarrow& u_2(T_{i+1}\setminus T'_{i+1})>u_2(T_i)\label{eqn-fact-1}\\
    p|T_{i+1}|>1/X &\Rightarrow& u_1(T'_{i+1})>u_1(T'_i)\label{eqn-fact-2}
\end{eqnarray}
Equation \ref{eqn-fact-1} implies that the first buyer prefers
$T'_{i+1}$ than $T'_i$ only if she can ensure that the second buyer
will not buy the set $T_i$ even if $T'_{i+1}$ is taken away.
Equation \ref{eqn-fact-2} indicates that when the price is
non-trivially high such that the set $T_{i+1}$ will give high
revenue, the first buyer will buy $T'_{i+1}$ and prevent the second
buyer from buying $T_i$. Therefore, we shall have $c_{i+1}>a_i>b_i>c_i$ and
thus the parameter $c$ satisfies that:
\begin{equation*}
    \frac{1+Y+Y^2}{X}>c>\frac{Y}{X}
\end{equation*}

Recall that $Y=1/2$, we have $1+Y+Y^2=1.75$. So we let $c=1/X$ and Equation
\ref{eqn-fact-1} and \ref{eqn-fact-2} are guaranteed and we
have $c_{i+1}>a_i>b_i>c_i$. We now consider various possibilities
for the choice of $p$:

\begin{itemize}
\item If the single price $p$ is in the range $[c_i,a_i]$, then
buyer 1 will buy all items in $T'_i$, and buyer 2 will buy all items
in $T_{i+1}$ since $u_2(T_{i+1})>u_2(T_i\setminus T'_i)>u_2(T_{i-1})$.
Therefore, the profit is $p|T_{i+1}|+p|T'_i|<2/X<\opt/(X/2)$. This is
true for all $i<k$.

\item If the single price $p$ is in the range $[a_i,c_{i+1}]$,
then buyer 1 will buy all items in $T'_{i+1}$, and buyer
2 will buy all items in $T_{i+2}$ since
$u_2(T_{i+2})>u_2(T_{i+1}\setminus T'_{i+1})>u_2(T_i)$. So the profit
is $p|T_{i+2}|+p|T'_{i+1}|<2/X<\opt/(X/2)$. This is true for all
$i<k-1$.

\item For $p>c_k$, the only items that can be sold is
$T_k\cup T_{k+1}\cup T_{k+2}$, and the revenue is at most
$v_1(T'_k)+v_2(T_k)\leq 2(Y^k) = 2/X<\opt/(X/2)$.
\end{itemize}

Thus we get an $\Omega(X)=2^{\Omega(\sqrt{\log n})}$ gap between
revenue and the optimal social welfare. Since the total number of
items $n=|\cup_{1\leq i\leq k} S_i|$ is between $n_0$ and $2 n_0$, the
construction and proof of lower bound for the 2 player instance is
complete.

\paragraph{Extensions of the two-player instance:}
We now complete the proof of Theorem \ref{stat-unif-lb} by extending
the above two-player instance..
We present three hard instances such that
\begin{enumerate}
\item {\em Instance 1:} The $2^{\Omega(\sqrt{\log n})}$
lower bound holds even if the buyers come in random order;
\item {\em Instance 2:} The lower bound holds even if the
seller can choose the order of the buyers;
\item {\em Instance 3:} The lower bound still holds if all the buyers
  have identical valuation functions.
\end{enumerate}
The construction of each of the latter instances is based on the
previous instance. Instance 1 is almost identical to one
of the scenarios used in Balcan et al. \cite{BBM08}.

\medskip
\noindent
{\bf Instance 1:} Consider a setting where there are $m$ buyers. One
of them has the same valuation function as buyer 2. The other $m-1$ of
them share the same valuation function as buyer 1. Each of the other
$m-1$ buyers has its own shadow copy of $T'_i$. Then the profit is at
most $1+(m-1)/X$ if the special buyer comes first and at most $m/X$
otherwise. So the expected revenue is $(1/m)(1+(m-1)/X)+(m-1)/m(m/X)<1/m+m/X$.
The optimal social welfare is $\opt>1$.
Let $m=\sqrt{X}$ and we have the $2\sqrt{X}=2^{\Omega{(\sqrt{\log n})}}$
lower bound.

\medskip
\noindent
{\bf Instance 2:} Now consider another instance in which we
replicate (items only, not the buyers) $m$ copies of this
setting such that each buyer is the buyer 2 in exactly 1 replicate.
For each buyer,
there is an additive component for each combination of the additive
components in all $m$ copies.
Suppose the first buyer buys $T_i$ in the replicate in which
she is the buyer 2 and buys $T'_j$ in all other replicate.

\begin{itemize}
    \item If $j\neq i$, then each of the other $m-1$ buyers will buy
        $S_i$ in the replicate in which she is the buyer 2
        and her own shadow copies of $T'_j$. It follows from our choice
        of parameters that $j\neq i$
        implies each copy of the $T_i$ and $T'_j$ provides at most $1/X$
        revenue. So the total revenue is at most $m^2/X$.
    \item If $j=i$, then each of the other $m-1$ buyers will buy
        $T_{i+1}$ in the replicate in which she is buyer 2
        and her own shadow copies of $T'_i$. In this case, the revenue
        obtained from $T_i$ is at most $1$ and each copy of $T'_i$
        and $T_j$ provides at most $1/X$. So the total revenue is
        at most $1+(m^2-1)/X$.
\end{itemize}

It is clear that the optimal social welfare is $\opt>m$, thus giving
a gap of $1/m + m/X=2\sqrt{X}=2^{\Omega(\sqrt{\log{n}})}$.

\medskip
\noindent
{\bf Instance 3:} Finally, consider a setting where all buyers are
identical and share
the same valuation function $v(S)=\max_{1\leq i\leq m}v_i(S)$, where
$\{v_i | 1\leq i\leq m\}$ are the buyer valuation functions as defined
in Instance 2. Each
of the buyer comes and buys some $S_i$ in one of the replicate and
$S'_j$ in the all other replicates. We say the buyer occupies the
replicate which contains $S_i$ in this case.
Note that a copy of $S_i$ is also available in some unoccupied
replicate, and buying the copy in the unoccupied replicate does not
effect the behavior of the buyers who come after her. So for each of
the possible scenario, there is equivalent scenario in which each
buyer occupies an unoccupied replicate. Therefore, we have the same
$2^{\Omega(\sqrt{\log{n}})}$ lower bound as in the previous setting.

This completes the proof of Theorem \ref{stat-unif-lb}.

\section{Dynamic Uniform Pricing Strategies}\label{dynunif-section}

We now present a dynamic uniform pricing strategy that achieves an
$O(\log^2 n)$-approximation to the revenue when buyer valuations are subadditive.
This improves upon the previous best known approximation factor of
$2^{O(\sqrt{\log n \log \log n})}$~\cite{BBM08} for the \IP problem.
Our strategy makes the assumption that the seller knows $\opt$, the maximum
social welfare, to within a constant factor. However, this assumption can easily
be eliminated by using Lemma \ref{param-assumption}, worsening the approximation
ratio of the strategy by a poly-logarithmic factor.

We will also establish an almost matching lower bound result which shows that no
dynamic uniform pricing strategy can achieve $o(\log^2 n/\log \log^2
n)$-approximation even when buyers are restricted to XOS valuations,
the seller knows the value of $\opt$, buyer valuation functions, and
is allowed to specify the order of arrival of the buyers!

\subsection{A Dynamic Uniform Pricing Algorithm}\label{dyn-alg-subsection}

The algorithm follows a simple strategy.
Let $k=\lceil \log n \rceil + 1$, and let $p_i=\opt/2^i$ (recall that
$\opt$ denotes the maximum social welfare). The algorithm starts at
time $0$ by choosing a {\em threshold value}  $p^*$ from the set
$\{p_1,p_2\ldots p_{k+1}\}$, uniformly at random. Upon arrival of any
buyer, the algorithm chooses a price $\hat{p}$ uniformly at random
from the set $\{p_1,p_2 \ldots,p^*\}$, and assigns the price $\hat{p}$
to all items that are yet unsold.

\begin{theorem}\label{thm-dynamic}
If the buyer valuations are subadditive, then the
 expected revenue obtained by the dynamic strategy above is $\Omega(\opt/\log^2
 n)$.
\end{theorem}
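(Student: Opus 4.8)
The plan is to analyze the revenue buyer by buyer, relating the price charged to the "opportunity value" still available at the moment the buyer arrives. Fix any ordering of buyers $B_1, B_2, \ldots, B_m$. When $B_j$ arrives, let $I_j$ be the set of still-unsold items and let $v_j$ be the buyer's valuation. The key quantity to track is $H_{v_j}(I_j) = \max_{S \subseteq I_j} v_j(S)$, the maximum welfare extractable from $B_j$ given the current inventory. First I would observe that, conditioned on the threshold $p^*$ being set to a value that is $\Theta(H_{v_j}(I_j))$ — more precisely $p^* \geq H_{v_j}(I_j)$ but not too much larger, which happens with probability $\Omega(1/k) = \Omega(1/\log n)$ for each fixed $j$ by the choice of geometric scale — the price $\hat p$ offered to $B_j$ is uniform over a geometric sequence $\{p_1, \ldots, p^*\}$ with $p^* = H'$ for some $H' = \Theta(H_{v_j}(I_j))$ and step $\gamma = 2$, of length $k = \Theta(\log n)$. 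Then I would invoke Lemma~\ref{BBM-revsum} with this $H'$ and $k$: since $H'/2^k \le H_{v_j}(I_j)/n \cdot O(1)$ and $|I_j| \le n$, the term $|I_j| H' / 2^k$ is at most a constant fraction of $H_{v_j}(I_j)$, so $\sum_t p[t] |\Phi(p[t])| = \Omega(H_{v_j}(I_j))$. Since $\hat p$ is chosen uniformly among these $k$ prices, the expected revenue collected from $B_j$ (conditioned on the good threshold event and on the inventory $I_j$) is $\Omega(H_{v_j}(I_j)/k) = \Omega(H_{v_j}(I_j)/\log n)$. Combining with the $\Omega(1/\log n)$ probability of the good threshold, the expected revenue from $B_j$ is $\Omega(H_{v_j}(I_j)/\log^2 n)$.

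The remaining task is to show $\sum_j \E{H_{v_j}(I_j)} = \Omega(\opt)$, so that summing the per-buyer bounds yields $\E{R} = \Omega(\opt/\log^2 n)$. Let $(T_1^*, \ldots, T_m^*)$ be a welfare-maximizing allocation, so $\sum_j v_j(T_j^*) = \opt$. The point is a telescoping/coupling argument: when $B_j$ arrives, either most of $T_j^*$ is still unsold, in which case $H_{v_j}(I_j) \geq v_j(T_j^* \cap I_j)$ is large (here I would use subadditivity, in the form of Lemma~\ref{unsold-value} or a direct subadditivity estimate, to argue $v_j(T_j^* \cap I_j) \ge v_j(T_j^*) - v_j(T_j^* \setminus I_j)$), or else many items of $T_j^*$ have already been sold — but each sold item was sold at some price, contributing to the revenue $R$ directly, and more to the point the items of $T_j^*$ that were sold earlier were "claimed" by earlier buyers. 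I expect the cleanest way is to argue that for every welfare-optimal set $T_j^*$, at the time $B_j$ arrives, either $B_j$ can still realize a constant fraction of $v_j(T_j^*)$ from the leftover items, or a constant fraction of $v_j(T_j^*)$ worth of "value" was already captured — and to make the bookkeeping work, charge $v_j(T_j^*)$ against the sum $\sum_{j' \le j} (\text{revenue from } B_{j'}) + H_{v_j}(I_j)$ carefully so that no unit of welfare is double-counted.

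The main obstacle is precisely this last charging argument: subadditivity (as opposed to, say, fractional subadditivity/XOS) makes it delicate to say that "the welfare lost from $T_j^*$ because some of its items were sold" is recovered elsewhere, since subadditive functions can drop value sharply when a set is restricted. I would handle it by a worst-case dichotomy: sort buyers and compare $\sum_j v_j(T_j^*)$ to $\sum_j H_{v_j}(I_j)$ using the fact that any item sold before $B_j$ was, by Lemma~\ref{unsold-value}, sold at a price supported by the buyer who bought it, so total revenue already accounts for a $p \cdot (\text{size})$ lower bound on those items; then either the revenue already collected before $B_j$ is $\Omega(\opt)$ (and we are done trivially, since $R \ge$ that, which is far larger than $\opt/\log^2 n$), or few of the optimal items are gone and $\sum_j H_{v_j}(I_j) = \Omega(\opt)$. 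Formalizing "few are gone" uniformly over all $j$ is where the care is needed; I'd introduce a stopping time — the first buyer by which $\Omega(\opt)$ revenue has accrued — and analyze the two regimes before and after it separately.
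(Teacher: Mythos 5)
Your Step A is close in spirit to the paper's Lemma~\ref{lem-dynamic}, though the threshold conditioning is stated backwards: you want the bottom of the offered price range $\{p_1,\ldots,p^*\}$ to reach roughly $H_{v_j}(I_j)/|I_j|$, not to sit at $\Theta(H_{v_j}(I_j))$; the top of the range $p_1=\opt/2$ already dominates $H_{v_j}(I_j)$, so a random uniform choice among $O(\log n)$ prices pays $\Omega(H_{v_j}(I_j)/\log n)$. That part can be repaired. There is also a circularity you should notice: $I_j$ is itself determined by the threshold and the earlier price draws, so ``condition on $p^*\approx H_{v_j}(I_j)$'' is not a well-posed event unless you first fix $p^*$ and then analyze, which is what the paper does.

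The real gap is Step B, and you have correctly identified it but your proposed fix does not close it. Subadditivity gives only $H_{v_j}(I_j)\geq v_j(T_j^*\cap I_j)\geq v_j(T_j^*)-v_j(T_j^*\setminus I_j)$, and the term $v_j(T_j^*\setminus I_j)$ is not controlled by the revenue collected when those items were sold: a subadditive $v_j$ can concentrate essentially all of $v_j(T_j^*)$ on a handful of items, which an earlier buyer may have purchased at the floor price $\Theta(\opt/n)$. In that case the revenue credited for the lost items is tiny while $H_{v_j}(I_j)$ collapses to near zero; your ``either accumulated revenue is $\Omega(\opt)$ or few optimal items are gone'' dichotomy has no force because removing very few items can destroy almost all of a buyer's value. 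Introducing a stopping time does not resolve this, since the difficulty is already present for a single realization and a single buyer.

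The paper sidesteps this with a different decomposition. Fix the threshold at $p^*=p_{j+1}$ and, for each buyer, define $T_i^j$ to be the subset of $T_i$ that $B_i$ would buy if offered only $T_i$ at uniform price $p_j$ --- a \emph{deterministic} object depending only on the optimal allocation, not on the process. Let $Z_i^j$ be the part of $T_i^j$ already sold when $B_i$ arrives. Two facts do all the work: (i) because the threshold is $p_{j+1}$, every item in $Z_i^j$ was sold at price at least $p_{j+1}=p_j/2$, so revenue already accounts for $\geq p_j|Z_i^j|/2$; (ii) because $T_i^j$ is a \emph{supported} set, Lemma~\ref{unsold-value} guarantees $v_i(T_i^j\setminus Z_i^j)\geq p_j|T_i^j\setminus Z_i^j|$, so the leftover has per-item value $\geq p_j$, and Lemma~\ref{lem-dynamic} then extracts $\Omega(p_j|T_i^j\setminus Z_i^j|/j)$ from $B_i$ in expectation. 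Adding these gives $\E{R^j}=\Omega(\sum_i p_j|T_i^j|/j)$, and summing over $j$ via Lemma~\ref{BBM-revsum} recovers $\opt$ up to the $\log^2 n$ loss, with the error term $\sum_i|T_i|\opt/(2n)\leq\opt/2$ controlled only after summing over $i$. This use of supported sets at each price scale is exactly the structural property that makes subadditivity tractable; it is what is missing from your Step B, and it replaces your $H_{v_j}(I_j)$ accounting entirely.
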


The following lemma is key to the proof of Theorem
\ref{thm-dynamic}. It says that if the threshold is ``correctly''
chosen, then our dynamically reset prices give a large fraction of the
maximum possible revenue. 


\begin{lemma}\label{lem-dynamic}
Suppose when the $i^{th}$ buyer $B_i$ arrives, there remains a set $L_i^j$
of unsold items such that $v_i(L_i^j)\geq p_j |L_i^j|$, where $v_i$
is the valuation function of $B_i$. Then if the seller picks a
price from $\{p_1,p_2,\cdots,p_{j+1}\}$ uniformly at random, and
prices all items at this single price, it receives an
expected revenue of at least $p_j |L_i^j|/2(j+1)$ from this buyer.
\end{lemma}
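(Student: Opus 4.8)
The plan is to apply Lemma \ref{BBM-revsum} to the buyer $B_i$ restricted to the item set $L_i^j$, with the scale parameter $\gamma = 2$ and an appropriate choice of $H'$ and $k$. First I would observe that the hypothesis $v_i(L_i^j) \ge p_j |L_i^j|$ means that if we offer the whole set $L_i^j$ at the uniform price $p_j$, the buyer gets nonnegative utility from buying everything, and more usefully, that $H_{v_i}(L_i^j) \ge v_i(L_i^j) \ge p_j|L_i^j|$. Set $H' = \opt$, so that $p_t = \opt/2^t = H'/2^t$ exactly matches the price schedule $p[t]$ in Lemma \ref{BBM-revsum} (with $\gamma = 2$), and note $H' = \opt \ge H_{v_i}(L_i^j)$ since $\opt$ is an upper bound on the welfare of any single buyer on any item set. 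Now apply Lemma \ref{BBM-revsum} with $k$ replaced by $j$: it gives
\[
\sum_{t=1}^{j} p_t\,|\Phi(B_i, L_i^j, p_t)| \;\ge\; \frac{1}{2-1}\left( H_{v_i}(L_i^j) - \frac{|L_i^j|\,\opt}{2^{j}} \right) \;\ge\; H_{v_i}(L_i^j) - |L_i^j|\,p_j,
\]
using $\opt/2^j = 2 p_j$, so the subtracted term is $2 p_j |L_i^j|$ — wait, that overshoots; I would instead run the sum up to $t = j+1$ (which is exactly the set $\{p_1,\dots,p_{j+1}\}$ the seller randomizes over), giving the subtracted term $|L_i^j|\,\opt/2^{j+1} = p_j |L_i^j|$. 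Combined with $H_{v_i}(L_i^j) \ge 2 p_j |L_i^j|$? No — we only know $H_{v_i}(L_i^j) \ge p_j |L_i^j|$, so the right-hand side $H_{v_i}(L_i^j) - p_j|L_i^j|$ could be zero. So the correct move is to keep the sum to $t = j+1$ but be slightly more careful: $H_{v_i}(L_i^j) \ge v_i(L_i^j) \ge p_j |L_i^j| = 2 \cdot (\opt/2^{j+1})|L_i^j|$, hence the subtracted term is at most $\tfrac12 H_{v_i}(L_i^j)$, and Lemma \ref{BBM-revsum} yields $\sum_{t=1}^{j+1} p_t |\Phi(B_i,L_i^j,p_t)| \ge \tfrac12 H_{v_i}(L_i^j) \ge \tfrac12 p_j |L_i^j|$.

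Next I would translate this sum-of-potential-revenues bound into an expected-revenue statement. When the seller picks $p_t$ uniformly from the $(j+1)$-element set $\{p_1,\dots,p_{j+1}\}$ and offers all of $L_i^j$ (and possibly other unsold items, which only helps) at that price, buyer $B_i$ purchases a utility-maximizing set; restricting attention to purchases within $L_i^j$, the revenue collected from $B_i$ is at least $p_t|\Phi(B_i, L_i^j, p_t)|$ — here I should note $\Phi(B_i, L_i^j, p_t)$ refers to the utility-maximizing purchase from $L_i^j$ alone, and since the other available items at price $p_t$ are a separate pricing decision, I would phrase the bound as: the buyer's actual purchase yields revenue at least that of the best response restricted to $L_i^j$. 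Thus the expected revenue from $B_i$ is at least $\frac{1}{j+1}\sum_{t=1}^{j+1} p_t|\Phi(B_i,L_i^j,p_t)| \ge \frac{1}{j+1}\cdot \tfrac12 p_j|L_i^j| = \frac{p_j|L_i^j|}{2(j+1)}$, which is exactly the claimed bound.

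The main obstacle — and the only real subtlety — is the interaction between the items in $L_i^j$ and the other unsold items that are simultaneously priced at $p_t$: I need to argue that offering a \emph{superset} of $L_i^j$ at the uniform price $p_t$ does not reduce the revenue extracted relative to offering $L_i^j$ alone. This follows because the buyer is utility-maximizing over the larger set and pays $p_t$ per item purchased; in particular, buying $\Phi(B_i, L_i^j, p_t)$ is one feasible option for her within the larger menu, so her realized utility is at least $\u(B_i, L_i^j, p_t)$, and by Lemma \ref{unsold-value} (applied to the supported subset she does buy) every item she buys is ``paid for'' at rate $p_t$, so the revenue from her purchase is $p_t$ times the number of items bought, which — by the size-monotonicity in Lemma \ref{decreasing-size} together with the fact that her optimal purchase in the larger set has value at least $v_i(\Phi(B_i,L_i^j,p_t))$ — is at least $p_t|\Phi(B_i,L_i^j,p_t)|$. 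I would spell this comparison out carefully, since it is where the ``limited supply / other items on the shelf'' aspect of the model enters; the rest is a direct invocation of Lemma \ref{BBM-revsum} with $\gamma=2$, $H'=\opt$, and the averaging over $j+1$ equally likely prices.
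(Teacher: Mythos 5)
Your plan has a genuine gap at exactly the step you flag as the "only real subtlety," and the fix is not just a matter of spelling it out carefully: the claim that offering a superset $I' \supseteq L_i^j$ of items at uniform price $p$ yields revenue at least $p\,|\Phi(B_i,L_i^j,p)|$ is simply false for XOS (hence subadditive) valuations. Take items $\{a,b,c\}$ and the XOS valuation $v = \max\{v_1,v_2\}$, where $v_1$ is additive with $v_1(a)=v_1(b)=2$, $v_1(c)=0$, and $v_2$ is additive with $v_2(c)=10$, $v_2(a)=v_2(b)=0$. With $L=\{a,b\}$, $I'=\{a,b,c\}$, and $p=1$, the buyer shown $L$ alone buys $\{a,b\}$ (utility $2$, revenue $2$), but shown $I'$ she buys only $\{c\}$ (utility $9$, revenue $1$). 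Adding an item changed which additive component is optimal, and she now buys strictly fewer items. You cannot get the needed comparison from the tools you cite: Lemma \ref{decreasing-size} is monotonicity in the \emph{price} on a fixed set, not in the \emph{set} at a fixed price, and Lemma \ref{unsold-value} only certifies that whatever she buys is worth its price. Your observation that her realized utility on $I'$ is at least $\u(B_i,L_i^j,p_t)$ is correct, but higher utility does not imply more items bought (indeed, in the counterexample she gets far more utility while buying far fewer items).

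The paper's proof never leaves the set $I'$ of actually-available items, so it never needs this false set-monotonicity. The hypothesis $v_i(L_i^j)\ge p_j|L_i^j|$ is used only as a \emph{lower bound on the utility} of $B_i$ at price $p_{j+1}$ on $I'$: since $L_i^j\subseteq I'$ is a feasible purchase, $\u(B_i,I',p_{j+1})\ge v_i(L_i^j)-p_{j+1}|L_i^j|\ge p_j|L_i^j|/2$. It then upper bounds $\u(B_i,I',p_{j+1})$ by $\sum_{t=1}^{j+1}p_t|\Phi(B_i,I',p_t)|$ via the same step-function/upper-integral argument that underlies Lemma \ref{BBM-revsum}, but with the integral truncated at $p_{j+1}$ from below rather than at $0$; this kills the $-|S|H'/\gamma^k$ error term (which, applied to $S=I'$, would have been $-|I'|p_{j+1}$ and could swamp the bound), and it directly controls $\Phi(\cdot)$ evaluated on $I'$, which is the quantity the revenue actually depends on. Your computation that $\sum_{t=1}^{j+1} p_t\,|\Phi(B_i,L_i^j,p_t)|\ge p_j|L_i^j|/2$ via Lemma \ref{BBM-revsum} with $\gamma=2$, $H'=\opt$, $k=j+1$ is correct as algebra, but it bounds the wrong sum; the repair is to redo that derivation on $I'$ with the utility lower bound above in place of $H_{v_i}(\cdot)$, which is precisely the paper's argument.
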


\begin{proofof}{Lemma \ref{lem-dynamic}}
Let $I'$ be the set of unsold items when the buyer $B_i$
arrives.
Since this is a single buyer setting with uniform pricing, Lemma
\ref{decreasing-size} applies. Thus the number of items
sold is a non-increasing function of the price set on all
items, and equation (\ref{netutil-eqn}) is applicable.

Now if the uniform price chosen by the seller is $p_{j+1}$,
then buying the set $L_i^j$ would give $B_i$ a utility of at least $v_i(L_i^j) -
p_{j+1} |L_i^j|\geq p_j |L_i^j|- p_{j+1}|L_i^j|$
since $v_i(L_i^j) \geq p_j |L_i^j|$ by the assumption of the lemma. Thus

 \begin{equation}
 \label{eqn:util_lowerbound}
  \u(B_i,I',p_{j+1}) \geq p_j |L_i^j|- p_{j+1}|L_i^j| = \frac{p_j |L_i^j|}{2}
 \end{equation}

Suppose $q_s > p_{j+1} \geq q_{s+1}$, for some $s\leq l$. Then, since
$\u(B_i,I',q_0)=\u(B_i,I',q_1)=0$, and also that $q_1\leq \opt$, so

 \begin{equation*}
  \begin{aligned}
    \u(B_i,I',p_{j+1}) &=
    \u(B_i,I',q_s)+  \left( \u(B_i,I',p_{j+1})-\u(B_i,I',q_s) \right) \\
    &=\sum_{t=1}^{s-1} \left( \u(B_i,I',q_{t+1})-\u(B_i,I',q_{t}) \right) +
    n_s(q_s - p_{j+1}) \\
    &= \sum_{t=1}^{s-1}n_t(q_t - q_{t+1})+n_s(q_{s}-p_{j+1})
  \end{aligned}
 \end{equation*}

The above sum can be seen as an integral of the following step
function $f$ from $p_{j+1}$ to $q_1$: in the range $[q_{t+1},q_t)$,
$f$ takes the value $n_t$. So we can upper bound it by an upper
integral of $f$. Note that $|f(p)\leq \Phi(B_i,I',p)|\leq |S|$, and
also that $f$ is a decreasing function. Thus we get
\begin{equation*}
  \begin{aligned}
  \u(B_i,I',p_{j+1})   &\leq \sum_{t=0}^j |\Phi(B_i,I',p_{t+1})|(p_t -
  p_{t+1}) \\
    &= \sum_{t=0}^j |\Phi(B_i,I',p_{t+1})|p_{t+1}\\
    & = \sum_{t=1}^{j+1}
    |\Phi(B_i,I,p_{t})|p_{t} \\
  \end{aligned}
 \end{equation*}

Combining with equation~\ref{eqn:util_lowerbound}, we get
 \begin{equation}
  \sum_{t=1}^{j+1} |\Phi(B_i,I,p_{t})|p_{t} \geq \frac{p_j |L_i^j|}{2}
 \end{equation}

Thus the expected revenue obtained from $B_i$ is

$$\sum_{t=1}^{j+1} \frac{|\Phi(B_i,I,p_{t})|p_{t}}{j+1} \geq
\frac{p_j |L_i^j|}{2(j+1)},$$
completing the proof of the lemma.
\end{proofof}

\medskip

\begin{proofof}{Theorem \ref{thm-dynamic}}
 Let $(T_1,T_2,\ldots T_m)$ be an optimal allocation of items to
 buyers $B_1,B_2\ldots B_m$, who has valuation functions
 $v_1,v_2\ldots v_m$ respectively, such that $\sum_{i=1}^m
 v_i(T_i)=\opt$ is the maximum social welfare.
 Also, let $T_i^j$ be the subset of $T_i$ that would be bought by
 $B_i$ if it were shown only the items in $T_i$, and all items were uniformly
 priced at $p_j$. Now consider the case when
 $p^*=p_{j+1}$. Let $R^j$ be the revenue in this case. Let
 ${Z_i^j} \subseteq T_i^j$ be a random variable that
 denotes the subset of items in $T_i^j$
 that are sold before buyer $B_i$ comes. Then
$R^j\geq \sum_{i=1}^m p^*|{Z_i^j}|=\sum_{i=1}^m p_j|{Z_i^j}|/2$

 Note that $v_i(T_i^j \setminus {Z_i^j}) \geq p_j |T_i^j \setminus
 {Z_i^j}|$ by Lemma \ref{unsold-value}.
 So, by Lemma \ref{lem-dynamic}, conditioned on
 the set ${Z_i^j}$, the expected revenue received from $B_i$ is
 at least $\left( p_j|T_i^j\setminus {Z_i^j}| \right)/2(j+1)$. Thus, conditioned on
 the sets ${Z_i^j}$ for all $i$, we have

 \begin{equation*}
   \E{R^j|{Z_i^j} \ \forall 1\leq i\leq m}\geq \Omega\left( \sum_{i=1}^m
   \left( p_j|{Z_i^j}| + \frac{p_j |T_i^j\setminus
   {Z_i^j}|}{j} \right) \right) = \Omega \left( \sum_{i=1}^{m} \frac{p_j|T_i^j|}{j} \right)
 \end{equation*}

 Since the value on the right-hand side above is independent of the variables
 ${Z_i^j}$ on which the expectation of $R^j$ is conditioned on, we get

\begin{equation*}
 \E{R^j} =  \Omega( \sum_{i=1}^{m} \frac{p_j|T_i^j|}{j} )
\end{equation*}

Thus the expected
 revenue $R=\sum_{j=0}^k R^j$ of our dynamic strategy is given by

 \begin{equation}
 \label{eqn:E[R]}
   \begin{aligned}
   \E{R} &=\frac{1}{k+1} \sum_{j=0}^k \E{R^j} = \Omega\left(
   \sum_{j=0}^k \sum_{i=1}^m \frac{p_j |T_i^j|}{k^2} \right) =
   \Omega\left(
   \sum_{i=1}^m \sum_{j=0}^k  \frac{p_j |T_i^j|}{k^2} \right)
   \end{aligned}
 \end{equation}

Since $k=\lceil \log n \rceil$, and $\opt\geq H_{v_i}(T_i)$,
from Lemma \ref{BBM-revsum} and Equation~(\ref{eqn:E[R]}),
it follows that

\begin{equation*}
\sum_{j=0}^k p_j |T_i^j|
\geq \Omega\left( v_i(T_i) - \frac{|T_i|\opt}{2n} \right)
\end{equation*}

 Thus we have


\begin{equation*}
\begin{aligned}
\E{R} &= \Omega\left(  \frac{1}{k^2} \left( \sum_{i=1}^m
    v_i(T_i) - \sum_{i=1}^m \frac{|T_i|\opt}{2n} \right) \right)\\
&= \Omega\left( \frac{1}{k^2} \left( \opt - \frac{\opt}{2} \right)
\right) = \Omega\left( \frac{\opt}{\log^2 n} \right)
\end{aligned}
\end{equation*}

\end{proofof}

\subsection{Lower Bound for Dynamic Uniform
  Pricing}\label{dyn-lb-subsection}

We shall now construct a family of instances of the problem, where the buyers
have distinct XOS valuations, with $O(\log n/\log\log n)$ additive
components in each valuation function, such that no
dynamic uniform strategy can achieve an
$o(\log^2 n/\log \log^2 n)$-approximation, even in the full information setting,
and when the seller can even specify the order in which the buyers
should arrive.

\begin{theorem}\label{dyn-hardness}
There exists a set of buyers with XOS valuations, such that if the
seller is restricted to using a dynamic uniform pricing strategy, then
even when the seller has full information of buyer valuation functions and
can even choose the order of arrival of the buyers,
the revenue produced is $O((\log\log n)^2/\log^2 n)$ times $\opt$, where $n$ is the
number of items.
\end{theorem}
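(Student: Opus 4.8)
The plan is to build a hard instance by "stacking" many independent copies of the static‑uniform hard two‑player gadget from Theorem \ref{stat-unif-lb}, arranged so that a dynamic pricing strategy — which is allowed to change the uniform price once per buyer — still cannot extract more than one gadget's worth of revenue at a time. The key structural idea is that a dynamic uniform strategy on $m$ buyers is really a sequence of $m$ static‑uniform choices, one per buyer; so if we chain together $\Theta(\log n/\log\log n)$ "levels" of the construction, each consuming a polylogarithmic number of items and each behaving like the static‑uniform gadget against the price chosen for the buyer who visits it, then at every level the adversary's single chosen price captures only an $X^{-1}$ fraction (in the notation of Theorem \ref{stat-unif-lb}, with $X$ a polylog factor) of that level's optimal welfare. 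Iterating, the total extractable revenue is $\opt$ times the product/sum of these per‑level losses, which we will tune to be $O((\log\log n)^2/\log^2 n)$.

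Concretely, I would proceed as follows. \textbf{Step 1:} Recall from Theorem \ref{stat-unif-lb} the two‑buyer gadget with parameters $X = 2^{\Theta(\sqrt{\log n_0})}$, $Y=1/2$, $6k+3$ blocks $S_0,\dots,S_{6k+2}$, and the property that for \emph{any} single uniform price $p$, the revenue obtainable from the pair is at most $O(\opt/X)$. Here I instead want the per‑level blow‑up to be only polynomial in $\log n$ rather than $2^{\sqrt{\log n}}$, so I reparametrize: set $X = \Theta(\log n)$ (equivalently shrink each gadget to use only $\poly\log n$ items, so the number of blocks per gadget is $\Theta(\log\log n)$, since $X^{6k}\approx 1/n_{\text{gadget}}$ forces $k=\Theta(\sqrt{\log\log n})$ per gadget — I will pick the split of the $\Theta(\log n/\log\log n)$ total "price levels" across gadgets to make the arithmetic work). \textbf{Step 2:} Form $t = \Theta(\log n/\log\log n)$ levels; level $r$ holds a fresh copy of the gadget's item set, scaled by a factor $X^{-r}$ in valuation so that the "interesting" price window for level $r$ sits strictly below that of level $r-1$. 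Assign buyers so that on arrival the seller faces level $r$'s two‑player subinstance; the order of arrival is fixed by us (we get to choose it), so we route buyer $1$ of level $r$ before buyer $2$ of level $r$, exactly as in the static lower bound. \textbf{Step 3:} For whatever price $\hat p_r$ the dynamic strategy posts for the buyer currently at level $r$, invoke the static‑uniform analysis of Theorem \ref{stat-unif-lb} to conclude that the combined payment from that level's two buyers is $O(\opt_r/X)$ where $\opt_r$ is the optimal welfare contributed by level $r$; crucially, because the price windows of the levels are disjoint and decreasing, the price chosen for a buyer at level $r$ lies in the "bad" regime for every other level too, so no single price ever does well simultaneously across levels. \textbf{Step 4:} Sum over levels: $\E{R} \le \sum_r O(\opt_r/X) = O(\opt/X) = O(\opt\log\log n/\log n)$ from the gadget blow‑up, and then account for the $\Theta(\log n/\log\log n)$ replication needed to spread welfare over levels (as in Instances 1–2 of Theorem \ref{stat-unif-lb}, where replicating items but not buyers loses another factor), which squares the loss to yield the claimed $O((\log\log n)^2/\log^2 n)$ ratio. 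Finally I would verify the valuations are XOS with $O(\log n/\log\log n)$ additive components (each level contributes $O(\log\log n)$ components, as the gadget's valuations are $3$‑XOS‑like per block, composed across $t$ levels — actually the component count is dominated by the cross‑product structure of Instance 2, which I must bound carefully).

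The main obstacle I anticipate is \textbf{Step 3}: ensuring that the dynamic strategy genuinely cannot "help itself" by choosing, for each of the $m$ buyers, a price tailored to \emph{that} buyer's level. Static‑uniformity in Theorem \ref{stat-unif-lb} is used crucially — the gap argument there splits on which window $p$ falls into and shows every window is bad. In the dynamic setting each buyer gets her own $\hat p$, so I need the \emph{limited‑supply interaction} to do the work: buyer $1$ at level $r$, faced with a price in her level's window, must be induced to buy a small high‑value set $T'_{i+1}$ (costing little, yielding little revenue) precisely in order to block buyer $2$ at level $r$ from buying the large set $T_i$ — this is the content of constraints \eqref{eqn-fact-1} and \eqref{eqn-fact-2}. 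The delicate point is that this blocking behavior must be \emph{robust to the price the seller posts for buyer $1$ and for buyer $2$ independently}, i.e. the ranges $[c_i,a_i]$, $[a_i,c_{i+1}]$ partition the whole price line so that \emph{whatever} $\hat p$ is posted, one of the three bad cases of Theorem \ref{stat-unif-lb} applies, and the disjointness of windows across levels means a price good for level $r$ is useless for level $r'\ne r$. I would also need to double‑check that the adversary choosing the arrival order cannot re‑interleave buyers across levels to defeat the scaling — but since we (the lower‑bound constructor) fix the order, this is under our control, and the freedom granted to the seller is only in price‑setting, so the argument should close. The remaining work — counting XOS components and confirming $n$ as a function of $t$ and $X$ — is bookkeeping along the lines already carried out in Section \ref{stat-lb-section}.
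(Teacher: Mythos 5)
Your proposal does not close, and the gap is exactly at the point you flagged (``Step 3''). The static two-player gadget from Section~\ref{stat-lb-section} proves a lower bound only against a \emph{single} price faced by \emph{both} buyers: the three cases $p\in[c_i,a_i]$, $p\in[a_i,c_{i+1}]$, $p>c_k$ cover the price line, and in each one the joint behavior of buyer~1 (buying $T'_{i+1}$) and buyer~2 (buying $T_{i+2}$) at that \emph{same} $p$ keeps revenue at $O(\opt/X)$. Once the seller may post one price for buyer~1 and a different price for buyer~2, this machinery collapses: she can post a price above buyer~1's highest marginal value so buyer~1 buys nothing, leaving all items intact, and then run the single-buyer uniform-pricing argument of \cite{BBM08} against buyer~2 alone on a gadget with only $\poly\log n$ items. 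That yields revenue $\Omega\bigl(\opt_{\mathrm{gadget}}/\log(\poly\log n)\bigr)=\Omega(\opt_{\mathrm{gadget}}/\log\log n)$ per level. Summing over your $\Theta(\log n/\log\log n)$ levels, the seller extracts $\Omega(\opt/\log\log n)$ — far exceeding the target $O\bigl(\opt(\log\log n)^2/\log^2 n\bigr)$. So stacking scaled copies of the static gadget cannot give the theorem; there is no per-gadget ``robustness to independent prices'' to invoke, and the argument you hope makes the ranges $[c_i,a_i]$, $[a_i,c_{i+1}]$ do the work is precisely what is unavailable in the dynamic setting.

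The paper's actual construction is of a different kind. Each buyer $B_i$ gets a set of \emph{private} items $S'_{i0},\dots,S'_{ik}$ (valuable only to $B_i$, with a nested XOS structure so a single uniform price can extract at most $O(F)$ from them) and a set of \emph{shared} items $S_{i0},\dots,S_{ik}$ valued additively by everyone, but with a one-level shift for other buyers. The crucial property is not intra-gadget blocking between a designated pair; it is that the first time the posted price falls into the $j$-th price window, the buyer facing it (whoever she is) sweeps away \emph{all} shared items at levels $<j$ from \emph{every} buyer. Hence there is at most one ``$j$-good sale'' per level $j$, each worth $O((m/Y)kF)$; every other round yields at most the private-item $O(F)$. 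This gives $O((1+k^2/Y)mF)$ revenue against $\opt=\Omega(mk^2F)$, and optimizing $Y=k^2$, $m=2Y$, $n=\Theta(Y^{k+1})$ gives $k=\Theta(\log n/\log\log n)$ and the stated bound. The private-items fallback and the ``once-per-price-level'' accounting are the two ideas your proposal lacks; without a substitute for them, the stacked-gadget plan gives a lower bound that is exponentially too weak.
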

\begin{proof}
Let $B_1,B_2\ldots B_m$ denote the buyers.
Our construction will use three integer parameters $k, F,$ and $Y$, to be specified later.
These parameters will satisfy the conditions that $k>1$, $F>1$, $Y > 4$,
and $m\geq 2Y\geq 4k$.
Let $f(i)=(i+1)F/Y^{i}$. Then, $f(0) > f(1) > \ldots > f(k) >
f(k+1)$.

For each buyer $B_i$, we create $2(k+1)$ disjoint sets of items
$S_{i0},S_{i1}\ldots S_{ik}$ and $S_{i0}',S_{i1}'\ldots S_{ik}'$ such
that $|S_{ij}| = |S_{ij}'| = Y^j$ items each. Let
$S_i=\cup_{0\leq j\leq k} S_{ij}$ and $S_i'=\cup_{0\leq j\leq k}
S_{ij}'$. We call the items in $S_i$ as {\em shared} and those in
$S_i'$ as {\em private}. The private items of $B_i$ are valued by
buyer $i$ only, and has zero value to all other buyers.

The valuation function $v_i$ of buyer $B_i$ is constructed as an XOS
valuation with $(k+2)$ additive functions $v_{i0},v_{i1}\ldots
v_{i(k+1)}$ in its support, that is, $v_i = \max_{0\leq j\leq k+1}
v_{ij}$. For $0\leq j \leq k$, the valuation function $v_{ij}$ has positive value
only for private items, and is defined as

\[ v_{ij}(x) = \left\{ \begin{array}{lll}
                   f(j)    & \mbox{if $x \in S_{ij}'$} \\                        0   & \mbox{otherwise}
                  \end{array}
          \right. \]

The valuation function $v_{i(k+1)}$
has positive values only for shared items:

\[ v_{i(k+1)}(x) = \left\{ \begin{array}{lll}
                   f(j)    & \mbox{if $x \in S_{ij}$ for $0\leq j\leq k$} \\                        f(j+1)   & \mbox{if $x \in S_{\ell j}$ for $1\leq \ell \leq m, \ell \neq i$, and $0\leq j\leq k$}
                  \end{array}
          \right. \]

This completes the description of the instance. Note that
$v_i(S_{ij}')=f(j) |S_{ij}'| = (j+1)F$, and that

\begin{equation*}
v_i(S_i) = \sum_{j=0}^k f(j)
|S_{ij}| = \sum_{j=0}^k (j+1)F = \Omega( k^2F)
\end{equation*}

Thus if we allocate each set $S_i$ to buyer $B_i$ for $i=1,2\ldots m$,
the social welfare obtained is $\Omega( mk^2F)$, and hence $\opt$ is
$\Omega( mk^2F)$.

Consider now the arrival of some buyer $B_i$ at time $t$.
By our construction of the valuation function, $B_i$ will either
buy only shared items or buy only private items, but not both. If the buyer $B_i$ were
to buy shared items, and the price of each item is set at $f(j)\geq
p> f(j+1)$, then $B_i$ would pick up all remaining items in

\begin{equation*}
\left( \bigcup_{0 \le t \le j} S_{it} \right) \bigcup
\left( \bigcup_{1\leq \ell \neq i \leq m} \bigcup_{ 0\leq t\leq (j-1)
  } S_{\ell t} \right)
\end{equation*}

Since $\sum_{0\leq t\leq j} Y^t \leq 2 Y^j$,
the total price that $B_i$ would pay to the seller is bounded by

\begin{equation*}
f(j) (2Y^j + 2mY^{j-1}) = 2(j+1)F + 2m(j+1)F/Y = 2(1 + m/Y)(j+1)F
\end{equation*}

We now consider the maximum revenue generated if $B_i$ were
to buy a subset of its private items. Note that when $B_i$ arrives,
all private items of $B_i$ are still unsold.
Suppose $B_i$ were to buy private items. What is the maximum
revenue we can get? For this, note that if the price of each item is
$(j+1)F/jY^j$, then the utility from buying $S_{ij}'$ is

\begin{equation*}
(j+1)F - (j+1)F/j =(j+1)(j-1)F/j = (j-1/j)F,
\end{equation*}

 and the utility from buying $S_{i(j-1)}'$ is

\begin{equation*}
jF - (j+1)F/jY > (j-1/j^2)F> (j-1/j)F \text{,\ \ \  since $Y>2j$.}
\end{equation*}

 So at this price, the set $S_{i(j-1)}'$ is preferred
by $B_i$ than $S_{ij}'$, and since the items in sets $S_{it}'$ for
$t>j$ have
less value than the price, they are not even considered. For a greater
price, the utility of $S_{i(j-1)}'$ must continue to dominate that
of $S_{ij}'$, since the former has fewer items. So at most
$Y^{j-1}$ items are bought when the price is at least $(j+1)F/jY^j$,
for all $j\geq 1$. This implies that the revenue obtained from $B_i$
when she buys from her private items is at most $Y^j ((j+1)F/jY^j) <
2F$.

Consider any ordering of buyers. If the price is ever set at more than
$f(0)$, then no item is sold in that round, while if the price set is
$f(k+1)$ or lower, all items are sold in that round and the revenue
generated is at most $2mY^k f(k+1)=2m(k+2)/Y$.
Consider the first time when the price set in a round is
at most $f(j)$ but greater than $f(j-1)$, for some $0\leq j\leq k$. We
call this round a {\em $j$-good sale}, and let $B_i$ be the buyer. In
a $j$-good sale, $B_i$ may buy all remaining items in $S_{it}$ for all
$0\leq t\leq j$, plus all items in $S_{lt}$ for all $1\leq \ell \leq m,
\ell \neq i$ and $0\leq t\leq j-1$, thus giving a revenue of at most

\begin{equation*}
2(1+ m/Y)(j+1)F \leq O((m/Y)kF)
\end{equation*}

However, consider any time when a price in the range
$(f(j-1),f(j)]$ appears again, and let $B_l$, $\ell \neq i$ be the buyer
who faces this price. If $B_l$ were to buy shared items, the only
items that are valued higher than the price and still remaining are
those in $S_{\ell j}$, since $B_i$ took away whatever was remaining of
$S_{\ell t}$ for all $t<j$. Note that the only reason the shared items
could have given $B_i$ a better utility was that the shared items
had additive valuation, while the private items had XOS valuation, so
she got no benefit in picking up multiple sets of private
items. However, since only one feasible set $S_{\ell j}$ of the shared
items is left, this advantage has vanished, and the revenue from $B_{\ell}$
is the same as the revenue if there were no shared items at all. As
discussed above, the revenue from $B_{\ell}$ in this case is at most $2F$.

Finally, since a $j$-good sale can happen
at most once for any $1 \le j \le (k+1)$, the
total revenue generated fro all $j$-good sales is $O\left( (mk^2 F)/Y \right)$.
The remaining rounds each give a revenue of at most $2F$,
contributing in total $O(mF)$ to the revenue. Thus
the revenue obtained by any dynamic uniform strategy, {\em for any
ordering of buyers}, is
$O\left((1 + k^2/Y) mF \right)$.

Now since the maximum social welfare is $\Omega(k^2 mF)$, the approximation
factor achieved is bounded from below by $\Omega\left( (k^2 Y)/(k^2 + Y) \right)$.
For any $k> 10$, if we set $Y=k^2$ and $m=2Y$, then $n=\Theta(Y^{k+1}) =
k^{\Theta(k)}$, and the approximation factor is $\Omega(k^2)$.
As $k=\Theta(\log n/\log \log n)$, we get that the
smallest approximation factor that can be achieved is
$\Omega\left((\log n/\log \log n)^2\right)$.
\end{proof}

\section{Dynamic Monotone Uniform Pricing Strategies}\label{dynmon-section}

We now present a simple strategy that uses a monotonically decreasing
uniform pricing for the items. When the number of buyers $m$ is at least
$2\log n$, the strategy gives an $O(\log^2 n)$-approximation to the
revenue provided the buyers arrive in a uniformly random order, that
is, all permutations of the buyers are equally likely to be the arrival
order. As a corollary of this result, we conclude that if the buyers
are identical, no matter the order in which they arrive, this pricing
scheme gives an $O(\log^2 n)$-approximation. The strategy assumes that
the seller knows the number of buyers $m$ (and also $\opt$), and is
{\em deterministic}. Knowing estimates of $m$ and $\opt$ up to constant
factors are also sufficient for the performance of our strategy.

 Let $k=\log n + 1$, and let $\gamma =
2^{ \frac{k}{m}} \geq 1$. Thus $\gamma^m >
2n$. The strategy gives a good guarantee only when $m\geq \log n
+1$. The strategy is as follows: When the $t^{\rm th}$ buyer arrives,
the seller prices all unsold items uniformly at

\begin{equation*}
p[t] = \frac{\opt}{2\gamma^{t}}\ \ \ .
\end{equation*}

Thus the price decreases with time. For
 $m=\omega(\log n)$, the relative decrease in the price for
 consecutive buyers is

\begin{equation*}
\frac{p[t] - p[t+1]}{p[t]} =
 \left(1- \frac{1}{\gamma}\right) =
 \Theta(\frac{\log n}{m}),
\end{equation*}

 which tends to zero, and so the price
 decreases smoothly with time.


\begin{theorem}\label{thm-dynmon}
Suppose $m\geq \log n +1$, and suppose that the buyer valuations are
subadditive.  If the ordering of buyers in which they arrive is
uniformly random (that is, all permutations are equally likely), then
the expected revenue of the dynamic monotone uniform pricing scheme
above is $\Omega( \frac{\opt}{\log^2 n} )$.
\end{theorem}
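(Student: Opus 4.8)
The plan is to mirror the proof of Theorem~\ref{thm-dynamic}, using the randomness of the arrival order in place of the seller's per‑round price randomization. Fix a social‑welfare‑maximizing allocation $(T_1,\dots,T_m)$, so $\sum_i v_i(T_i)=\opt$, and for each buyer $B_i$ and level $j\in\{1,\dots,k\}$ let $T_i^j=\Phi(B_i,T_i,p_j)$ be the set $B_i$ would buy if shown only $T_i$ at the uniform price $p_j=\opt/2^j$. By Lemma~\ref{unsold-value}, $v_i(S')\ge p_j|S'|$ for every $S'\subseteq T_i^j$, and by Lemma~\ref{BBM-revsum} applied with $S=T_i$, $H'=\opt\ge H_{v_i}(T_i)$, ratio $2$, and $k$ steps, $\sum_{j=1}^k p_j|T_i^j|\ge v_i(T_i)-|T_i|\opt/2^k=v_i(T_i)-|T_i|\opt/(2n)$; summing over $i$ and using $\sum_i|T_i|\le n$ gives $\sum_{i=1}^m\sum_{j=1}^k p_j|T_i^j|\ge\opt/2$. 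So it suffices to prove the expected revenue is $\Omega\!\left(k^{-2}\sum_{i,j}p_j|T_i^j|\right)$.

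Next I exploit the structure of the price schedule. Since $m\ge k$ we have $\gamma=2^{k/m}\in(1,2]$, and $p[1]>\cdots>p[m]$ is geometric with ratio $1/\gamma$, with $p[1]<\opt/2=p_1$ and $p[m]=\opt/(4n)=p_{k+1}$; hence every $p[t]$ lies in a unique interval $[p_{j+1},p_j)$, the number of positions $t$ with $p[t]\in[p_{j+1},p_j)$ is within $1$ of $m/k$ and hence $\Theta(m/k)$, and these sets partition $\{1,\dots,m\}$. Call $W_j$ this ``window'' of positions, over which the price stays within a factor $2$ of $p_j$. Let $\sigma(i)$ denote the (random) position of buyer $B_i$; since $\sigma$ is a uniformly random permutation, $\Pr[\sigma(i)\in W_j]=|W_j|/m=\Theta(1/k)$, and conditioned on $\sigma(i)\in W_j$ the position $\sigma(i)$ is uniform over $W_j$, so the price seen by $B_i$ is uniform over the $\Theta(m/k)$ distinct prices appearing in $W_j$.

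Now commit each buyer $B_i$ to the single level $j=j(i)$ of its window, and bound the revenue as in Theorem~\ref{thm-dynamic}. When $B_i$ arrives, let $Z_i^j$ be the set of items of $T_i^j$ already sold; conditioned on $\sigma(i)\in W_j$ these were sold to earlier buyers at prices larger than $p[\sigma(i)]\ge p_j/2$, contributing revenue $\ge(p_j/2)|Z_i^j|$, and since the $T_i$ are pairwise disjoint there is no double counting between these contributions and the bundles charged below to the buyers themselves. For the revenue collected from $B_i$, the unsold set $L_i=T_i^j\setminus Z_i^j$ satisfies $v_i(L_i')\ge p_j|L_i'|$ for all $L_i'\subseteq L_i$ by Lemma~\ref{unsold-value}, and $B_i$ faces a price within a factor $2$ of $p_j$, so its utility is $\Omega(p_j|L_i|)$; feeding this into the step‑function / upper‑integral estimate used in the proofs of Lemma~\ref{decreasing-size} and Lemma~\ref{lem-dynamic}, and using that (given $\sigma(i)\in W_j$) the price of $B_i$ is uniform over the $\Theta(m/k)$ window prices, one extracts expected revenue $\Omega(p_j|L_i|/k)$ from $B_i$. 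Together with the $\ge(p_j/2)|Z_i^j|$ already counted this is $\Omega(p_j|T_i^j|/k)$ conditioned on $\sigma(i)\in W_j$, whence $\E{R_i}\ge\sum_j\Pr[\sigma(i)\in W_j]\cdot\Omega(p_j|T_i^j|/k)=\Omega\!\left(k^{-2}\sum_j p_j|T_i^j|\right)$. Summing over $i$ and invoking the first paragraph gives $\E{R}=\Omega(\opt/k^2)=\Omega(\opt/\log^2 n)$.

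The technical heart, and the only place where this is more than a direct translation of Theorem~\ref{thm-dynamic}, is the extraction step of the previous paragraph: replacing the seller's explicit price randomization by the window‑uniformity of $B_i$'s arrival price. Two points need care. First, the prices inside a window span only a factor of $2$ rather than the whole range down to $p_{j+1}$, so one should either restrict attention to the lower portion of each window (where the buyer's utility from $L_i$ is a definite constant fraction of $p_j|L_i|$) or observe that the remaining upper‑integral tail, from $p[1]$ up to $H_{v_i}(T_i)\le\opt$, is only a constant multiple of the revenue at the top window price because $\gamma\le2$. Second, $Z_i^j$ is correlated with $\sigma(i)$ (arriving later means more items are gone), so the conditioning must be done carefully---on $\{\sigma(i)\in W_j\}$ and then position by position within $W_j$---before the window‑uniformity of the price is invoked. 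The deterministic‑strategy claim and the assumption that only constant‑factor estimates of $m$ and $\opt$ are known follow as stated, and the corollary for identical buyers is immediate, since then every arrival order induces the same run and the random‑order bound holds for every order.
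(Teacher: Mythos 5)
Your first paragraph and the window decomposition are fine, and the overall structure (bound $\E{R_i}$ per buyer, sum, invoke Lemma~\ref{BBM-revsum}) matches the spirit of the paper's argument. The gap is in the extraction step, and it is exactly where the paper's proof uses a different, essential idea. You assert that, conditioned on $\{\sigma(i)\in W_j\}$, the expected revenue from $B_i$ is $\Omega(p_j|L_i|/k)$ where $L_i=T_i^j\setminus Z_i^j$, by an upper-integral argument over the $\Theta(m/k)$ window prices. But the upper-integral estimate bounds $\u$ at the lowest considered price by $\u$ at the highest considered price plus a sum of per-price revenue terms; in Lemma~\ref{lem-dynamic} the highest price considered is $p_1=\opt/2$, above which the utility is zero, so the tail vanishes. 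In your window the highest price is only about $p_j$, and $\u(p_j)$ can already be large and entirely independent of $|L_i|$: if, when $B_i$ arrives, some small bundle $A\not\subseteq T_i$ with $v_i(A)$ comparable to $\opt$ is still available and dominates $B_i$'s demand at every price in the window, then $B_i$ buys only $A$ at every window price, paying $O(p_j|A|)$, while $p_j|L_i|$ can be $\Theta(\opt)$; the shortfall factor is $|L_i|/|A|$, which can greatly exceed $k$. Neither of your proposed repairs helps: restricting to the lower half of the window leaves the tail $\u(p_j)$ untouched, and the tail is not ``a constant multiple of the revenue at the top window price'' --- it is the utility at $p_j$, unrelated to window revenues.

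The paper circumvents this by telescoping across \emph{all} $m$ insertion positions, not within a single window. Fixing a permutation $\pi$ of the other buyers and writing $\pi_j$ for the permutation with $B_i$ inserted at position $j$, it observes that the set available at position $j$ under $\pi_j$ contains the set available at position $j+1$ under $\pi_{j+1}$, so the bundle $S_i^{j+1}$ that $B_i$ buys at position $j+1$ is still available at position $j$; this yields $R_i^{j+1}\ge\frac{1}{\gamma-1}(\u_i^{j+1}-\u_i^j)$, with the base case $R_i^1\ge\frac{1}{\gamma-1}\u_i^1$ because no bundle is worth more than $p[0]=\opt$. Telescoping gives $\sum_{t\le j}R_i^t\ge\frac{\u_i^j}{\gamma-1}\ge\frac{p[j]|T_i^j\setminus Z_i^j|}{\gamma-1}$. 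In the dominating-bundle scenario above each individual $R_i^t$ is tiny, but $\sum_{t\le j}R_i^t$ accumulates geometrically and matches the right-hand side --- this cumulative charge to \emph{earlier} positions is precisely what your window-local bound drops. Summing the cumulative inequality over $j$, combining with the $R_i'$ term, and averaging then gives $\Omega(\opt/\log^2 n)$. This is also why the paper defines $T_i^j$ at the $m$ scheduled prices $\opt/\gamma^j$ (one per position) rather than at the $k$ dyadic levels $p_j$: the telescoping needs one reference bundle per insertion position.
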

\begin{proof}
 Let $(T_1,T_2,\ldots T_m)$ be an optimal allocation of items to
 buyers $B_1,B_2\ldots B_m$, who has valuation functions
 $v_1,v_2\ldots v_m$ respectively, such that $\sum_{i=1}^m
 v_i(T_i)=\opt$ is the maximum social welfare.
 Also, let $T_i^j$ be the subset of $T_i$ that would be bought by
 $B_i$ if it were shown only the items in $T_i$, and all items were uniformly
 priced at $\opt/\gamma^j = 2 p[j]$.

Fix a buyer $B_i$. Let $R_i$ be a random variable that denotes the
revenue obtained by the seller from $B_i$. Let $R_i'$ be a random
variable that denotes the revenue obtained by the seller by selling
items in $T_i$. Then, if $R$ is a random variable that denotes the
total revenue obtained by our strategy, we have $R=\sum_{i=1}^m R_i$
and $R\geq \sum_{i=1}^m R_i'$, so $R\geq \sum_{i=1}^m \frac{R_i +
  R_i'}{2}$.

Fix a permutation $\pi$ of all buyers except $B_i$. We shall say that
the event $\pi$ occurs if these buyers arrive in the relative order
given by $\pi$, with $B_i$ arriving somewhere in between. We shall now
compute $\E{R_i + R_i'|\pi}$.

Let $\pi_j$ denote the permutation of all the buyers formed by
inserting $B_i$ after the $(j-1)^{th}$ but before the $j^{th}$
position in $\pi$, whichever exists, for $1\leq j\leq m$. That is
$B_i$ comes in as the $j^{th}$ buyer in $\pi_j$. Let $Z_i^j$ denote
the number of items that were sold before the arrival of $B_i$ when
the arrival sequence of buyers is $\pi_j$. Note that $Z_i^j$ is no
longer a random variable once $\pi_j$ is fixed, and neither are $R_i$
and $R_i'$. Also note that $\pr{\pi_j | \pi} = 1/m$. Thus,

\begin{equation}
\label{mon-eqn1}
\E{R_i' | \pi} \geq \frac{1}{m} \sum_{j=1}^{m} p[j-1] |Z_i^j| \geq
\frac{1}{m} \sum_{j=1}^{m} p[j] |Z_i^j|
\end{equation}

Let $S_i^j$ be the set of items bought by $B_i$ when the permutation
of buyers is $\pi_j$, let $\u_i^j$ be the utility derived by $B_i$
in this process, and let $R_i^j$ be the revenue obtained from $B_i$ in
the process. For $1\leq j<m$, note that when the permutation is $\pi_j$,
then when $B_i$ arrives, the set $S_i^{j+1}$ is also available, and
$B_i$ prefers $S_i^j$ over this set at price $p[j]$. Thus

\begin{equation*}
\begin{aligned}
\u_i^j &= v_i (S_i^j) - p[j] |S_i^j|\\
&\geq v_i(S_i^{j+1}) - p[j] |S_i^{j+1}|= v_i(S_i^{j+1}) - p[j+1]
|S_i^{j+1}| - (\gamma - 1) p[j+1] |S_i^{j+1}|\\
&= \u_i^{j+1} - (\gamma - 1) R_i^{j+1}
\end{aligned}
\end{equation*}

This implies that $R_i^{j+1} \geq \frac{1}{\gamma -1} (\u_i^{j+1} -
\u_i^j)$, for $1\leq j<m$. Also note that $v_i(S_i^1) - p[0] |S_i^1|
\leq 0$, since no bundle of items can have value greater than
$p[0]=\opt$. So $v_i(S_i^1) - p[1] |S_i^1| - (\gamma-1)p[1] |S_i^1| =
\u_i^1 - (\gamma-1) R_i^1\leq 0$, or $R_i^1 \geq \frac{1}{\gamma-1}
\u_i^1$. Thus, adding the terms $R_i^t$, we find that the terms
telescope, and

\begin{equation*}
\sum_{t=1}^j R_i^t \geq \frac{1}{\gamma-1} \left( \sum_{t=2}^j
  \left(\u_i^{j} - \u_i^{j-1} \right)  + \u_i^1 \right) =
\frac{1}{\gamma-1} \u_i^j
\end{equation*}

By Lemma \ref{unsold-value}, we have $v_i(T_i^{j}\setminus Z_i^j)
\geq 2p[j] |T_i^{j}\setminus Z_i^j|$. So the utility of
$T_i^{j}\setminus Z_i^j$ to buyer $B_i$ at price $p[j]$ is $(2p[j]
- p[j])|T_i^{j}\setminus Z_i^j| = p[j]|T_i^j\setminus
Z_i^j|$, which is at most $\u_i^j$. Thus

\begin{equation*}
\sum_{t=1}^j R_i^t \geq \frac{1}{\gamma-1}\u_i^j =
\frac{p[j]|T_i^j\setminus Z_i^j|}{\gamma -1}
\end{equation*}

Using the above equation, we get

\begin{equation*}
\begin{aligned}
m \sum_{j=1}^m R_i^j &\geq \sum_{j=1}^m \sum_{t=1}^j R_i^t =
\frac{1}{\gamma-1}\sum_{j=1}^m p[j] |T_i^j\setminus Z_i^j|\\
\Rightarrow \sum_{j=1}^m R_i^j &\geq \sum_{j=1}^m \frac{p[j]
|T_i^j\setminus Z_i^j|}{(\gamma - 1) m} \geq \Omega \left(\frac{p[j]
|T_i^j\setminus Z_i^j|}{\log n}\right)\\
\end{aligned}
\end{equation*}

The last inequality follows from the fact that $\gamma - 1 =
\Theta\left(\frac{\log n}{m}\right)$.

Note that $\E{R_i|\pi} = \frac{1}{m}\sum_{j=1}^m R_i^j$. Combining
with equation (\ref{mon-eqn1}), we get that

\begin{equation*}
\begin{aligned}
\E{R_i + R_i' | \pi} &= \frac{1}{m} \Omega \left( \sum_{j=1}^m p[j]
  \left(\frac{|T_i^j\setminus Z_i^j|}{\log n} + |Z_i^j|\right)
\right)\\
&\geq \frac{1}{m} \Omega \left( \sum_{j=1}^m \frac{p[j]|T_i^j|}{\log n}
\right) = \Omega \left(\frac{1}{m\log n}  \left( \sum_{j=1}^m
  \frac{\opt}{\gamma^j}|T_i^j| \right) \right)
\end{aligned}
\end{equation*}

Using Lemma \ref{BBM-revsum}, we get that

$$\sum_{j=1}^m
  \frac{\opt}{\gamma^j}|T_i^j| \geq \frac{1}{\gamma -1} \left(
    v_i(T_i) - \frac{|T_i|\opt}{\gamma^m} \right) \geq \frac{1}{\gamma -1}
  \left(  v_i(T_i) - \frac{|T_i|\opt}{2n}\right)$$

Again using the fact that $\gamma - 1 = \Theta\left( \frac{\log n}{m}
\right)$, we get that

\begin{equation*}
\E{R_i + R_i' | \pi} \geq \Omega \left( \frac{1}{\log^2 n} \left(
    v_i(T_i) - \frac{|T_i|\opt}{2n}\right)\right)
\end{equation*}

Since the right-hand-side of the above equation is independent of
$\pi$, we conclude that $\E{R_i + R_i'} \geq \Omega \left(
  \frac{1}{\log^2 n} \left( v_i(T_i) - \frac{|T_i|\opt}{2n} \right)
\right)$. Thus we get that the expected revenue is

\begin{equation*}
\begin{aligned}
\E{R} &= \frac{1}{2}\E{R_i + R_i'} \geq \Omega \left(\frac{1}{\log^2 n}
  \left( \sum_{i=1}^m v_i(T_i) - \sum_{i=1}^m \frac{|T_i|\opt}{2n}
  \right) \right)  \\
&= \Omega\left( \frac{1}{\log^2 n} \left( \opt -
    \frac{\opt}{2} \right) \right) = \Omega\left( \frac{\opt}{\log^2
    n} \right)
\end{aligned}
\end{equation*}
\end{proof}

\section{Static Non-Uniform Pricing}\label{statunif-section}

Another approach to get around the weak performance barrier for static uniform pricing,
is to consider static {\em non-uniform} pricing, which allows the seller to post different
prices for different items but the prices remain unchanged over time.
In Section \ref{stat-lb-section} we showed
that there exist instances with identical buyers
where no static uniform pricing can achieve better than
$2^{\Omega( \sqrt{\log{n}})}$-approximation even in the
full information setting.
 Surprisingly, this hardness result breaks
down if we consider non-uniform pricing using only two distinct
prices.

\subsection{Full Information Setting}
We first introduce the $(p,\infty)$-strategies, i.e. the seller
posts price $p$ for a subset of the items and posts $\infty$ for all
other items. The intuition is by using this strategy the seller can
prevent the buyers from buying certain items (high utility but low
revenue) and thus achieve better revenue.
The proof of the theorem below depends on the performance of the
following dynamic monotone strategy.
Let $k=\ceil{\log n} +1$ and $m'=\floor{m/(k+1)}$. Recall that $p_i=\opt/2^i$
for $i=1,2,\cdots,k$. The seller posts a single price $p_1$ for the
first $m'$ buyers, then she posts a single price $p_2$ for the next
$m'$ buyers, and so on and so forth. We call each time period that
the seller posts a fixed price a {\em phase}, and we call this
strategy the $k$-phase monotone uniform strategy.
The proof of Theorem \ref{thm-dynmon} can be easily modified to show
that this strategy gives $O(\log^2 n)$-approximation as well.

\begin{theorem}\label{thm-p-inf}
    In the full information setting, if $m\geq \log n +1$, and all
    buyers share the same subadditive valuation function,
    then there exists a $(p,\infty)$-strategy which
    obtains revenue at least $\Omega(\opt/\log^3{n})$.
\end{theorem}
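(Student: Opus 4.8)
The plan is to reduce to the $k$-phase monotone uniform strategy introduced just above the statement. First, since all $m$ buyers share the same subadditive valuation function, every arrival order of the buyers is equivalent (indistinguishable), so in particular a uniformly random order behaves exactly like any fixed order; hence the claimed modification of Theorem \ref{thm-dynmon} guarantees that the $k$-phase monotone uniform strategy, with its $O(\log n)$ phases and per-phase prices $p_1 > p_2 > \cdots > p_k$, extracts total revenue $\Omega(\opt/\log^2 n)$. Since this revenue is the sum of the revenues collected in the $O(\log n)$ phases, by averaging there is a phase, say phase $j$ with posted price $p_j$, during which the seller collects revenue $R_j = \Omega(\opt/\log^3 n)$. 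This $\log n$ loss from discretizing the continuously-decreasing price into finitely many phases is exactly what separates $\log^3 n$ here from the $\log^2 n$ of the monotone strategy, and it is unavoidable because a static strategy can commit to only one finite price level.

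Next I would set up the $(p,\infty)$-strategy. Let $I_j$ be the set of items still unsold at the start of phase $j$ in the run of the $k$-phase strategy; in the full-information setting the seller can simulate that run and therefore knows $j$ and $I_j$. The strategy posts price $p_j$ on every item of $I_j$ and price $\infty$ on every other item.

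The core step is to show that this static strategy collects revenue at least $R_j$. Because $v$ is real-valued, no utility-maximizing buyer ever includes an item priced $\infty$ in her bundle, so under the static strategy each buyer effectively faces only the items of $I_j$ that remain unsold, all at the single price $p_j$. Since the buyers are identical, the situation faced by the first $\floor{m/(k+1)}$ buyers under the static strategy is, buyer by buyer, exactly the situation faced by the $\floor{m/(k+1)}$ fresh identical buyers of phase $j$ in the $k$-phase strategy (that phase begins with precisely the items $I_j$ available at the uniform price $p_j$). Hence these buyers replicate phase $j$ and generate revenue exactly $R_j$, and the remaining buyers can only increase the revenue, since sold items stay sold. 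Therefore the $(p,\infty)$-strategy obtains revenue $\ge R_j = \Omega(\opt/\log^3 n)$.

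The one point needing care — more a bookkeeping subtlety than a real obstacle — is making the two runs genuinely mirror each other: "item present but priced $\infty$" must be treated as equivalent to "item absent," and whatever utility-maximizing set a phase-$j$ buyer picks must be pickable by the corresponding static-strategy buyer facing the same menu. Both are handled by the convention stated in the preliminaries that statements involving $\Phi(B,I,p)$ hold for any utility-maximizing choice, so we simply fix one such choice throughout and carry it over. With that in place, the argument above completes the proof of Theorem \ref{thm-p-inf}.
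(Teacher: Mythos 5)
Your proof is correct and follows essentially the same approach as the paper's: take as given the $O(\log^2 n)$-approximation of the $k$-phase monotone uniform strategy on identical buyers, pick (by averaging over the $\Theta(\log n)$ phases) a phase contributing $\Omega(\opt/\log^3 n)$, and then define the $(p,\infty)$-strategy that prices the items unsold at the start of that phase at the phase's price and everything else at $\infty$, so the first block of buyers reproduces that phase. Your added remarks on why the simulation is faithful (infinite-priced items are never bought, identical buyers make the menus match, later buyers only add revenue) are correct and slightly more explicit than the paper's one-line justification; they do not change the substance.
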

\begin{proof}
	Given that the $k$-phase dynamic monotone uniform strategy
	for identical buyers obtains revenue at least
    $\Omega(\opt/\log^2{n})$, at least one of the $k=\ceil{\log{n} +1}$
    phases contributes $1/k$ fraction of the revenue. Without loss
    of generality, assume the $i^{th}$ phase contributes at least
    $\Omega(\opt/k\log^2{n})=\Omega(\opt/\log^3{n})$ revenue.
    Recall that $m'=\floor{m/(k+1)}$. Suppose
    $T$ is the set of items unsold at the beginning of phase $i$.

    Consider the following $(p,\infty)$-strategy. The seller posts price
    $p=p_{i+1}$ for each item in $T$, and posts $\infty$ for all other
    items. Then when the first $m'$ buyers come, they will behave the
    same as the $m'$ buyers in phase $i$ in the dynamic strategy scenario.
    So the revenue collected is at least $\Omega(\opt/\log^3{n})$.
\end{proof}

\subsection{Buyers with $\ell$-XOS Valuations}

The above theorem shows a clear gap between the power of uniform
pricing and the power of non-uniform pricing in the full information
setting. However, it crucially uses the knowledge of the valuation
function and the fact that all buyers are identical;
information that is usually not known to the seller.
Hence strategies in the limited information setting are more
desirable in practice.

Fortunately, we find that considering static non-uniform pricing is
also beneficial in the limited information setting. We first note that
if the buyer order is randomized, then it is quite easy to get an
$O\left(m\log n\log \opt(\log\log \opt)^2\right)$ approximation using
static uniform pricing, even with general
valuations, and without the assumption of knowing $\opt$. This can be
done as follows: Just focus on selling items to the first buyer. If
$B_i$ is the first buyer, and the algorithm knew the value $v_i(T_i)$,
then using the single buyer (unlimited supply setting) algorithm in
\cite{BBM08}, the algorithm gets $\Omega(v_i(T_i)/\log n)$ in
expectation from the first buyer, and we do not care what it gets from
the other buyers. Thus the expected revenue of the algorithm is
$\frac{1}{m}\left( \frac{\sum_{i=1}^m v_i(T_i)}{\log n}\right) =
\frac{\opt}{m\log n}$. This algorithm
would have to guess $v_i(T_i)\leq \opt$ of the first buyer $B_i$, up
to a constant factor, and can do so by incurring an additional factor
of $O(\log \opt (\log\log \opt)^2)$ as described in Lemma
\ref{param-assumption}.

However, if we require a strategy to give guarantees on expected
revenue against any order of buyers, and in particular an adversarial
ordering, then static uniform pricing cannot give a better bound than
$2^{\Omega(\sqrt{\log n})}$ even when there are only two buyers, with
$3$-XOS valuations. This
is evident from the proof of Theorem \ref{stat-unif-lb}. We now
show a static non-uniform strategy which achieves
polylogarithmic approximation if we assume the valuation functions are
$\ell$-XOS functions where $\ell$  is quasi-polynomial in $n$ and the
number of buyers is polylogarithmic, for all ordering of buyers.

Let $k=\ceil{2\log{n}}$. With probability half, the seller assigns a
single price $p$ randomly drawn from $\{p_1,p_2,\cdots,p_k\}$ to
all items. With probability half, the seller assigns
one of $p_1,p_2,\cdots,p_{k+1}$ uniformly at random for each item. The
price assignment remains unchanged over time.

\begin{theorem}
    \label{thm-non-uniform}
	For $m$ buyers with $\ell$-XOS valuations functions,
	the expected revenue of the above strategy is
        $\Omega\left(\frac{\opt}{m\log\ell\log^3 n}\right)$.
\end{theorem}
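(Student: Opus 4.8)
The plan is to charge the optimal welfare, buyer by buyer and price-level by price-level, against the revenue collected by the two random price assignments, using the per-item assignment as the main workhorse and the single random price as a complementary case.

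First I would fix a welfare-maximizing allocation $(T_1,\dots,T_m)$ and, for each $i$, an additive component $a_i$ of the $\ell$-XOS valuation $v_i$ with $a_i(T_i)=v_i(T_i)$, so $\opt=\sum_i\sum_{x\in T_i}a_i(\{x\})$. Then I discretize the welfare: for buyer $i$ and level $j\in\{1,\dots,k\}$ let $c_{ij}$ be the number of items $x\in T_i$ with $a_i(\{x\})\in[p_j,p_{j-1})$. Since $k=\ceil{2\log n}$, the items with $a_i$-value below $p_k=\opt/2^k\le\opt/n^2$ contribute, over all the pairwise disjoint $T_i$ together, at most $n\cdot(\opt/n^2)=\opt/n$, so $\sum_i\sum_{j=1}^{k}p_j c_{ij}=\Omega(\opt)$. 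Averaging over the at most $mk$ pairs $(i,j)$, there is a target pair $(i^\star,j^\star)$ with $p_{j^\star}c_{i^\star j^\star}=\Omega\big(\opt/(m\log n)\big)$; write $G^\star\subseteq T_{i^\star}$ for the $c:=c_{i^\star j^\star}$ witnessing items, so $a_{i^\star}(G^\star)\ge p_{j^\star}c=\Omega(\opt/(m\log n))$. The $1/m$ here is inherited from this reduction to a single target buyer, and the obstacle flagged below is precisely why one cannot keep the full sum over all pairs.

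The heart of the argument is that the per-item random assignment (chosen with probability $\tfrac12$) collects $\Omega\big(a_{i^\star}(G^\star)/(\log\ell\,\log^2 n)\big)$ in expectation, using two structural facts about an $\ell$-XOS buyer $B$ shown items $J$ at per-item prices $q(\cdot)$: (a) her utility-maximizing bundle $\Phi$ equals, up to ties, $\{x\in J:a(\{x\})>q(x)\}$, where $a$ is the component of $B$ maximizing $a(\Phi)$ — so she takes \emph{every} available item of strictly positive utility under her chosen component; and (b) consequently, once an item's random price is fixed, it is sold (if at all) for exactly that price, and with probability $1/(k+1)$ the price drawn for $x\in G^\star$ is one level below $a_{i^\star}(\{x\})$, i.e. $q(x)\in(a_{i^\star}(\{x\})/4,\,a_{i^\star}(\{x\})/2\,]$, in which case buying $x$ yields $a_{i^\star}$-utility exceeding $q(x)$. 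I would then argue: on this $1/(k+1)$ event, if $x$ is still unsold when $B_{i^\star}$ arrives, the still-available items of $G^\star$ priced one level below their value already give her utility at least a constant fraction of their $a_{i^\star}$-mass, which by fact (a) forces her chosen component to take them; and if $x$ was sold earlier, it was sold for $q(x)=\Theta(a_{i^\star}(\{x\}))$ anyway. Summing over $x\in G^\star$ and over the price event yields $\Omega\big(p_{j^\star}c/(k+1)\big)=\Omega(\opt/(m\log^2 n))$, with the extra $\log\ell$ reserved for the case analysis of the next step.

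The step I expect to be the main obstacle is reconciling the adversarial arrival order (limited supply) with $B_{i^\star}$'s freedom to pick any of her $\ell$ components: she may be distracted — another component $a'$ may value other available items more, so her chosen component is not $a_{i^\star}$ and she declines part of $G^\star$ — and earlier buyers may have bought items of $T_{i^\star}$. I would handle the first by a geometric peeling over the $\le\ell$ components (bucketing them by the value they place on the currently available set); if the chosen component beats $a_{i^\star}$ it sits in a higher bucket, and by the same per-item estimate it itself generates revenue comparable to what it promises, the recursion closing in $O(\log\ell)$ steps — this is where the $\log\ell$ factor enters. The second I would handle by crediting each sale to whoever makes it, together with fact (b) (any item of $G^\star$ bought at its favorable price pays $\Theta(a_{i^\star}(\{x\}))$) and, for items of $G^\star$ that go entirely unsold, a bound via fact (a) charging their $a_{i^\star}$-mass against the value (hence, after the peeling, the revenue) the distracting buyers obtain from the bundles they buy instead. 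The single-price branch is the complement: when the target's dominant bucket is large, a single uniform price near $p_{j^\star}$ already extracts $\Omega(p_{j^\star}c/\log n)$ from $B_{i^\star}$ by Lemma~\ref{BBM-revsum}, robustly against her component choice and against the number of items. Combining the two branches with the $\tfrac12$–$\tfrac12$ split gives expected revenue $\Omega\big(\opt/(m\log\ell\,\log^3 n)\big)$, as claimed.
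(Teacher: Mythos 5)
Your proposal captures the right structural fact about XOS buyers under non-uniform pricing (your ``fact (a)'': the bought bundle is exactly the available items with positive per-item utility under the winning additive component), and this fact is indeed essential to the paper's argument. But the heart of your plan --- the ``geometric peeling over the $\le\ell$ components'' --- does not resolve the central difficulty, and I believe the proposal has a genuine gap there.

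The problem is a circular dependence between the random price assignment and the buyers' realized component choices. You want to say: ``if $B_{i^\star}$ gets distracted by a component $a'$ that beats $a_{i^\star}$, then by the same per-item estimate $a'$ itself generates revenue comparable to what it promises.'' But the per-item estimate (``with probability $1/(k+1)$ the price of $x$ is one level below $a'(\{x\})$'') is a statement about the distribution of the price assignment; yet which component $a'$ the buyer ends up using is itself determined by that price assignment. You cannot first condition on the prices to determine $a'$, and then re-randomize prices to claim the favorable-price event for $a'$ happens with probability $1/(k+1)$. The same issue infects the claim that the recursion closes in $O(\log\ell)$ steps --- bucketing components by the value they place on the available set gives at most $O(\log\opt)$ buckets, not $O(\log\ell)$, and even within one bucket there is no a priori control on which of the $\ell$ components gets chosen as the prices vary. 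Similarly, ``crediting each sale to whoever makes it'' does not give a clean handle on items of $G^\star$ that go unsold because earlier buyers' purchases themselves depend on the realized prices.

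The paper sidesteps all of this with a union bound that is absent from your proposal. For \emph{every} tuple $z\in[\ell]^m$ of component choices it defines the aggregate additive function $a_z(g)=\max_i a_{i,z_i}(g)$ and the level sets $\Gamma_{z,i}$, and Lemma~\ref{lemma-non-uniform-2} shows, via Chernoff and a union bound over all $\ell^m$ tuples, that with probability $3/4$ the random per-item assignment simultaneously satisfies $|\Pi_i\cap\Gamma_{z,i}|\ge|A_z\cap\Gamma_{z,i}|/2k$ for every $z$. This is where the $\log\ell$ factor actually enters (the threshold $16m\log\ell$ for a level set to be ``large''), not from any peeling. Because the guarantee holds for all tuples at once, the final argument can take \emph{whatever} tuple $z''$ the buyers actually realize --- which is a random object depending on the prices --- and still conclude that the bought items in $\Gamma_{z'',i}\cap\Pi_i$ already carry $\Omega(a_{z''}(A_{z''})/k)$ revenue. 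The single-price branch (Lemma~\ref{lemma-non-uniform-1}) then covers precisely the complementary small level sets $B_z$, where the concentration fails, again uniformly over $z$. This two-case decomposition (large vs.\ small $\Gamma_{z,i}$, quantified over all $z$) is also quite different from your decomposition into a single target pair $(i^\star,j^\star)$; without the uniform-over-$z$ lemma your single-target reduction cannot be made rigorous, because the buyer's realized component need not be $a_{i^\star}$ and is not under your control.
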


Suppose the XOS valuation function of the $i^{th}$ buyer is
$v_i(S)=\max_{1\leq j\leq \ell} a_{i,j}(S)$. For each $m$-tuple
$z=(z_1,z_2,\cdots,z_m)\in[\ell]^m$, define $a_z$ to be an additive
function such that for each item $g\in I$, $a_z(g)=\max_{1\leq i\leq
m} a_{i,z_i}(g)$.
For each $z$ and each $1\leq i\leq k$, let $\Gamma_{z,i}$ denote
the set of items $g$ such $a_z(g)\in[p_i,p_{i-1})$. We say such a
set $\Gamma_{z,i}$ is large if its size is at least $16m\log\ell$
and we say it is small otherwise. Define $A_z$ and $B_z$ as follows:
\begin{equation*}
	A_z=\bigcup_{\substack{\Gamma_{z,i} \geq 16m\log \ell \\
            1\leq i\leq k}} \Gamma_{z,i}\enspace, \ \ \ \ \
	B_z=\bigcup_{\substack{\Gamma_{z,i}<16m\log \ell \\ 1\leq
            i\leq k}} \Gamma_{z,i}\enspace.
\end{equation*}

In the case where the seller posts one of $p_1,p_2,\cdots,p_{k+1}$
uniformly at random for each item, let $\Pi_i$ denote the set of items
which are priced $p_i/2$.

The following two lemmas are crucial to the proof of Theorem
\ref{thm-non-uniform}.

\begin{lemma}
    \label{lemma-non-uniform-1}
    If the seller posts a single price $p$ randomly drawn from
    $\{p_1,p_2,\cdots,p_k\}$ for all items, then the expected revenue is at
    least $\Omega(a_z(B_z)/m\log\ell\log n)$ for any $z\in[\ell]^m$.
\end{lemma}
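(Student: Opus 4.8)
The plan is to show that the single-price strategy, restricted to the ``small'' sets $\Gamma_{z,i}$, recovers a good fraction of $a_z(B_z)$. First I would observe that $a_z(B_z) = \sum_i a_z(\Gamma_{z,i})$, where the sum ranges over those $i$ with $|\Gamma_{z,i}| < 16m\log\ell$, and that for each such $i$ we have $a_z(\Gamma_{z,i}) < p_{i-1}|\Gamma_{z,i}| = 2 p_i |\Gamma_{z,i}|$. Since there are only $k = \ceil{2\log n}$ values of $i$, by averaging there is some index $i^*$ with $a_z(\Gamma_{z,i^*}) \geq a_z(B_z)/k$. I will argue that when the seller happens to pick the single price $p = p_{i^*}$ (which occurs with probability $1/k$ over the random choice from $\{p_1,\dots,p_k\}$), the revenue collected is $\Omega(a_z(\Gamma_{z,i^*})/(m\log\ell))$, which then gives the claimed bound after multiplying by the $1/k$ probability.

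The core step is the analysis at the fixed price $p_{i^*}$. Fix the ordering of buyers; consider the items of $\Gamma_{z,i^*}$. Each such item $g$ has $a_z(g) \geq p_{i^*}$, and since $a_z(g) = \max_i a_{i,z_i}(g)$, some buyer $i$ values $g$ individually at least $p_{i^*}$ under the additive component $a_{i,z_i}$. The key combinatorial claim I would prove is that when these items are all priced at $p_{i^*}$, at least a constant fraction of them (or at least $|\Gamma_{z,i^*}| - O(m\log\ell)$ of them, using that the set is small) get sold: informally, each buyer, upon arrival, will greedily buy every remaining item that she values individually above its price under \emph{some} additive component in her XOS representation, because buying such an item strictly increases her utility (additive lower bound on an XOS function). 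Summing the value over the additive components witnessing membership in $\Gamma_{z,i^*}$ and distributing across the $m$ buyers and $\ell$ components — this is where the $m\log\ell$ factor enters — one shows that the number of unsold items of $\Gamma_{z,i^*}$ is small, so the revenue is $\Omega(p_{i^*} |\Gamma_{z,i^*}|) = \Omega(a_z(\Gamma_{z,i^*}))$, and hence $\Omega(a_z(B_z)/k)$; dividing by the extra $m\log\ell$ that accounts for the fact that a buyer may spread her purchases over many components gives the stated $\Omega(a_z(B_z)/(m\log\ell\log n))$.

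The main obstacle I anticipate is precisely controlling how many items of $\Gamma_{z,i^*}$ can remain unsold. A buyer's XOS valuation only guarantees that \emph{one} additive component dominates on the set she actually buys, so an item that is valuable under component $a_{i,z_i}$ need not be bought if the buyer's chosen bundle is governed by a different component; I would need to argue via Lemma \ref{unsold-value} (or a direct utility-swap argument as in its proof) that if many items of $\Gamma_{z,i^*}$ that buyer $i$ values individually above $p_{i^*}$ remain when she arrives, she can strictly improve her utility by buying them, contradicting optimality — and then charge the ``missed'' items, across all $m$ buyers and all $\ell$ components of each, against the budget $16m\log\ell$ in the definition of ``small,'' so that the sold items still dominate a constant fraction of $\Gamma_{z,i^*}$. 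Making the bookkeeping of this charging argument precise, so that the $m\log\ell$ loss is exactly what appears and no more, is the delicate part; the rest is the routine averaging over $i$ and over the $1/k$ probability of choosing the right price.
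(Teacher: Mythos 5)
Your proposal takes a genuinely different route from the paper's, and as written it has a real gap at the core step. You want to establish that at price $p_{i^*}$ a constant fraction of $\Gamma_{z,i^*}$ (or all but $O(m\log\ell)$ of it) is sold. The second phrasing is vacuous: since $\Gamma_{z,i^*}\subseteq B_z$, by definition $|\Gamma_{z,i^*}|<16m\log\ell$, so ``all but $O(m\log\ell)$'' says nothing. The constant-fraction version is false as a general claim about XOS buyers. An XOS buyer does not greedily pick up every remaining item she values above the price under some additive component: with $v=\max(a_1,a_2)$, $a_1$ supported on item $1$ and $a_2$ on item $2$, each of value $10$, at price $5$ she buys exactly one of the two, and the other stays unsold despite having individual value above the price. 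Lemma~\ref{unsold-value} constrains only the bundle she \emph{does} buy, not the items she leaves behind, so the utility-swap argument you propose cannot recover the claim --- adding a left-behind item need not raise the value under the additive component that witnesses the value of her chosen bundle. Concretely, if $\Gamma_{z,i^*}$ has $r$ items all with $a_{i',z_{i'}}$-value exactly $p_{i^*}$ (so zero marginal utility each), and buyer $i'$ has another component valuing a single unrelated item at a huge amount, she buys that one item, revenue is $p_{i^*}$, and neither $\Omega(p_{i^*}|\Gamma_{z,i^*}|)$ nor a constant fraction of $\Gamma_{z,i^*}$ being sold holds.

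The paper's argument is much weaker and therefore much easier: it only needs \emph{one} item sold at price $p_i$ whenever $B_z\cap\Gamma_{z,i}\neq\emptyset$. Some $g$ in that set has $a_z(g)\geq p_i$, hence $v_{i'}(\{g\})\geq a_{i',z_{i'}}(g)=a_z(g)\geq p_i$ for the buyer $i'$ achieving the max, so either $g$ is already sold when she arrives (one sale already occurred) or she has a non-negative-utility option and buys something; either way $R_i\geq p_i$, and the item that gets sold need not even lie in $\Gamma_{z,i}$. That single sale already suffices, because $|B_z\cap\Gamma_{z,i}|<16m\log\ell$ and every item in $\Gamma_{z,i}$ has $a_z$-value in $[p_i,2p_i)$, so $a_z(B_z\cap\Gamma_{z,i})<32m\log\ell\cdot p_i$ and hence $R_i=\Omega\bigl(a_z(B_z\cap\Gamma_{z,i})/(m\log\ell)\bigr)$. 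Summing over $i$ and dividing by $k=\ceil{2\log n}$ gives the bound directly --- no averaging to select a best $i^*$ is needed. So the $m\log\ell$ factor comes purely from the size threshold that defines ``small,'' not from any accounting of how a buyer spreads purchases across her additive components, which is where your write-up mislocates it.
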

\begin{proof}
    Let $R_i$ denote the revenue if the seller posts a single price
    $p_i$. When the seller posts a single price $p_i$ for all items, the
    buyers will buy at least one item if
    $B_z\cap\Gamma_{z,i}$ is not empty. Note that
    $\size{B_z\cap\Gamma_{z,i}}<m\log\ell$, we have $R_i\geq
    a_z(B_z\cap\Gamma_{z,i})/m\log\ell$. Since $k=\ceil{2\log n}$,
   	the expected revenue is
    at least
    \begin{equation*}
        \frac{1}{k}\sum_{i=1}^k R_i\geq
        \sum_{i=1}^k\frac{a_z(B_z\cap\Gamma_{z,i})}{km\log\ell}=
        \frac{a_z(B_z)}{km\log\ell}=
		\Omega\left(\frac{a_z(B_z)}{m\log\ell\log n}\right)\enspace.
    \end{equation*}
\end{proof}

\begin{lemma}
    \label{lemma-non-uniform-2}
    If the seller posts one of $p_1,p_2,\cdots,p_{k+1}$ uniformly
    at random for each item,
	then with probability at least $3/4$ we have for every
    $z\in[\ell]^m$,
    $\size{\Pi_i\cap\Gamma_{z,i}} \geq
    \size{A_z\cap\Gamma_{z,i}}/2k$.
\end{lemma}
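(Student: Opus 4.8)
The plan is a Chernoff bound combined with a union bound over a suitably small family of sets. Fix a bucket index $i\in\{1,\dots,k\}$; a union bound over the $k$ choices of $i$ at the very end costs only a factor $k$. For any fixed set of items $S$, the quantity $|\Pi_i\cap S|$ is a sum of $|S|$ independent indicators, each equal to $1$ with probability $1/(k+1)$, so $\E{|\Pi_i\cap S|}=|S|/(k+1)$. Since $k=\ceil{2\log n}\ge 2$, the target value $|S|/(2k)$ is at most $\tfrac34$ of this mean, so a standard Chernoff (lower tail) bound gives $\pr{|\Pi_i\cap S|<|S|/(2k)}\le\exp(-\Omega(|S|/\log n))$. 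When $S=\Gamma_{z,i}$ is large, i.e.\ $|S|\ge 16m\log\ell$, this failure probability is at most $\exp(-\Omega(m\log\ell/\log n))$; and when $\Gamma_{z,i}$ is small the claimed inequality is trivial because then $A_z\cap\Gamma_{z,i}=\emptyset$, so the right-hand side is $0$.

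The difficulty is that we must control this simultaneously for every $z\in[\ell]^m$, and $\ell^m$ is far too many sets for a naive union bound against the tail above. The key is that each $\Gamma_{z,i}$ is assembled from only polynomially many ``atomic'' sets. Writing $a_z(g)=\max_{j}a_{j,z_j}(g)$ and $D^{(i)}_{j,c}=\{g\in I:a_{j,c}(g)\ge p_i\}$ (of which there are at most $m\ell$ for each $i$), one checks directly that $\Gamma_{z,i}=\bigl(\bigcup_{j}D^{(i)}_{j,z_j}\bigr)\setminus\bigl(\bigcup_{j}D^{(i-1)}_{j,z_j}\bigr)$; in particular $\Gamma_{z,i}\subseteq\bigcup_{j}D^{(i)}_{j,z_j}$. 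Partitioning $\Gamma_{z,i}$ according to which buyer attains $\max_{j}a_{j,z_j}(g)$ on each item gives a disjoint decomposition $\Gamma_{z,i}=\bigsqcup_{j=1}^{m}G_{z,i,j}$ with $G_{z,i,j}\subseteq D^{(i)}_{j,z_j}$, and if $\Gamma_{z,i}$ is large then by pigeonhole some piece $G_{z,i,j}$ contains at least $16\log\ell$ items. The plan is to establish, with probability at least $3/4$, the uniform statement ``$|\Pi_i\cap T|\ge|T|/(2k)$ for every set $T$ that can arise as such a piece'' -- the union bound now being over the $O(m\ell k)$ base sets $D^{(i)}_{j,c}$ and the admissible sub-pieces of each -- and then to sum these piecewise bounds over $j$ to recover $|\Pi_i\cap\Gamma_{z,i}|\ge|\Gamma_{z,i}|/(2k)$ by linearity.

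The step I expect to be the main obstacle is precisely the bookkeeping in the previous paragraph: one must argue that the family over which the uniform concentration is proved is small enough -- ideally polynomial in $n,m,\ell$, at worst $2^{O(m\log\ell/\log n)}$ -- so that the Chernoff tail $\exp(-\Omega(m\log\ell/\log n))$ from the first paragraph still dominates it. This is exactly what pins down the parameter choices in the statement: the ``large'' threshold $16m\log\ell$ (scaled so that after the pigeonhole step each relevant piece still has $\Omega(\log\ell)$ items), the fraction $1/(2k)$ on the right-hand side, and the per-item price being drawn uniformly from the $k+1$ values $p_1,\dots,p_{k+1}$ (so each item lands in $\Pi_i$ with probability $1/(k+1)=\Theta(1/\log n)$, which matches $1/(2k)$ up to a factor of two and makes the Chernoff deviation a constant). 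Once the uniform concentration over the admissible pieces is in hand, the inequality for an arbitrary $z$ follows from the disjoint decomposition $\Gamma_{z,i}=\bigsqcup_j G_{z,i,j}$ and linearity, and a final union bound over the $k$ values of $i$, with slack left in the constants, delivers the claimed $3/4$ success probability.
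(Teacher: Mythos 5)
Your computation of the Chernoff tail is right, and it exposes a gap that the paper's own proof also has. The paper's argument is exactly the naive one you rejected: it applies Chernoff to each large $\Gamma_{z,i}$, asserts the failure probability is at most $2^{-2m\log\ell}=\ell^{-2m}$, and takes a union bound over the fewer than $\ell^m k<\ell^{2m}/4$ sets $\Gamma_{z,i}$. But with $|\Gamma_{z,i}|\geq 16m\log\ell$ and each item independently in $\Pi_i$ with probability $1/(k+1)$, the mean is only $|\Gamma_{z,i}|/(k+1)$, and the lower-tail exponent at the threshold $|\Gamma_{z,i}|/(2k)$ is $\Theta(|\Gamma_{z,i}|/k)=\Theta(m\log\ell/\log n)$---exactly what you computed---not $\Theta(m\log\ell)$. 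The asserted $\ell^{-2m}$ tail therefore does not follow, and the union bound against $\ell^m k$ events fails once $\log n$ is larger than a constant. The minimal repair is to redefine ``large'' as $|\Gamma_{z,i}|\geq\Theta(mk\log\ell)=\Theta(m\log\ell\log n)$, after which both the Chernoff exponent and the naive union bound do work; this costs an additional factor of $\log n$ in Lemma \ref{lemma-non-uniform-1} and hence in Theorem \ref{thm-non-uniform}.

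Your alternative route, however, does not close the gap as sketched. The piece $G_{z,i,j}=\{g\in\Gamma_{z,i}:a_{j,z_j}(g)=\max_{j'}a_{j',z_{j'}}(g)\}$ is determined by the \emph{entire} vector $z$, not by $z_j$ alone: changing $z_{j'}$ for some $j'\neq j$ reshuffles which items buyer $j$ wins the argmax on. So although each $G_{z,i,j}$ sits inside one of the $O(m\ell)$ base sets $D^{(i)}_{j,c}$, the family of sub-pieces that can actually arise as $z$ ranges over $[\ell]^m$ is not polynomially bounded, and the step you yourself flag as the main obstacle---a union bound over ``admissible sub-pieces''---is precisely the missing idea. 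Proving concentration only for the base sets $D^{(i)}_{j,c}$ does not transfer to arbitrary subsets of them, since the randomness of $\Pi_i$ could in principle place all of $\Pi_i\cap D^{(i)}_{j,c}$ outside the particular sub-piece that arises for the relevant $z$. Without a genuine low-complexity or uniform-convergence argument for this family (and I do not see one that beats the simple threshold change above), the decomposition does not improve on the naive union bound.
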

\begin{proof}
    If $\Gamma_{z,i}$ is small then $\size{A_z\cap\Gamma_{z,i}}=0$
    and the given equation is trivially true. Now suppose
    $\Gamma_{z,i}$ is large, that is, $\size{\Gamma_{z,i}}\geq16m\log\ell$.
	Note that each item in $\Gamma_{z,i}$ has
    probability $1/k$ of being priced $p_i/2$. Using Chernoff bounds
    and we get that the probability that less than $1/2k$ fraction of
    $\Gamma_{z,i}$ are priced $p_i/2$ is at most $1/2^{2m\log\ell}=1/\ell^{2m}$.
	There are at most $\ell^m$ distinct $m$-tuples $z$. For each $z$
    there are at most $k=\ceil{2\log n}$ sets $\Gamma_{z,i}$. So the
	total number of different $\Gamma_{z,i}$ is at most $\ell^mk<\ell^{2m}/4$.
	By using union bound we finish the proof of this lemma.
\end{proof}

\medskip

We can now complete the proof of Theorem \ref{thm-non-uniform}.

\medskip

\begin{proofof}{Theorem \ref{thm-non-uniform}}
    If there exists some vector $z$ such that
    $a_z(B_z)\geq\opt/320\log^2 n$ then we know from Lemma
    \ref{lemma-non-uniform-1} that the expected revenue is at least
    $\Omega(\opt/m\log\ell\log n)$. Now let us assume
    $a_z(B_z)<\opt/320\log^2 n$ for any $z$.

    By Lemma \ref{lemma-non-uniform-2}, it suffices to prove that
	the expected revenue is high if for each $z\in[\ell]^m$
    $\size{\Pi_i\cap\Gamma_{z,i}}\geq\size{A_z\cap\Gamma_{z,i}}/2k$.
    Suppose $T=(T_1,T_2,\cdots,T_m)$ is the allocation that
	maximizes the social
	welfare, then $\opt=\sum_{i=1}^m v_i(T_i)$. There exists
	$m$-tuple $z'\in[\ell]^m$ such that $a_{i,z'_i}(T_i)=v_i(T_i)$ and thus

    $$\opt=\sum_{i=1}^m a_{i,z'_i}(T_i)\leq
    a_{z'}(I)=a_{z'}(A_{z'})+a_{z'}(B_{z'})\ \ .$$

By our assumption $a_{z'}(A_{z'})\geq\opt-\opt/320\log^2 n\geq\opt/2$
    and hence $a_{z'}(A_{z'}\cap\Gamma_{z',j})\geq \opt/2k$ for
    some $j\in[k]$. Let $Z$ denote the set $\Pi_j\cap\Gamma_{z',j}$ and
    we have $\size{Z}\geq\size{\Gamma_{z',j}}/2k$.
	Since $k=\ceil{2\log n}$, we have
    \begin{equation*}
        p_j\size{Z}\geq \frac{p_{j}\size{\Gamma_{z',j}}}{2k}\geq
        \frac{a_{z'}(\Gamma_{z',j})}{4k}\geq \frac{\opt}{8k^2}
		\geq\frac{\opt}{40\log^2 n}\enspace.
    \end{equation*}

    Suppose the $i^{th}$ buyer buys the set $S_i$ for $1\leq i\leq
    m$ and let $S$ denote the union of all $S_i$. If $\size{S\cap
    Z}\geq\size{Z}/2$ then the revenue is at least $(p_j/2)\size{S\cap Z}
    \geq (p_j/2)(\size{Z}/2)=\Omega(\opt/\log^2 n)$.
    Otherwise, $\size{Z\setminus S}\geq\size{Z}/2$.
	Let $u_i(S_i)$ denote the utility of set $S_i$ to the $i^{th}$
	buyer. We have
    \begin{equation*}
        \sum_{i=1}^mu_i(S_i)\geq\sum_{i=1}^m u_i(Z\setminus S)
        \geq a_{z'}(Z\setminus S)-\frac{p_j}{2}\size{Z\setminus S}
        \geq \frac{p_j}{2}\size{Z\setminus S}\geq
		\frac{\opt}{160\log^2 n}\enspace.
    \end{equation*}
    Hence $\sum_{i=1}^m v_i(S_i)\geq\sum_{i=1}^m u_i(S_i)=\Omega(\opt/\log^2 n)$.
	Note that there exists an $m$-tuple $z''\in[l]^m$ such that
	$a_{i,z''_i}(S_i)=v_i(S_i)$. So
    \begin{equation*}
        a_{z''}(B_{z''})+a_{z''}(A_{z''})=a_{z''}(I)\geq\sum_{i=1}^m
        a_{i,z''_i}(S_i)\geq\frac{\opt}{160\log^2 n}\enspace.
    \end{equation*}
    By our assumption, $a_{z''}(B_{z''})<\opt/320\log^2 n$, so
    $a_{z''}(A_{z''})=\Omega(\opt/\log^2 n)$. Note that an item $g$ is bought
	if and only if its price is less than $a_{i,z''_i}(g)$ for some $i$.
	So all items in
    $\Gamma_{z'',i}\cap\Pi_i$ are bought and with high probability
	the revenue is at least
    \begin{equation*}
    \begin{aligned}
        \sum_{i=1}^k\frac{p_i}{2}\size{\Gamma_{z'',i}\cap\Pi_i}
        &\geq\sum_{i=1}^k\frac{p_i\size{\Gamma_{z'',i}\cap
        A_{z''}}}{4k} = \Omega\left(\frac{a_{z''}(A_{z''})}{k}\right)
        =\Omega\left(\frac{\opt}{\log^3 n}\right)\enspace.
    \end{aligned}
    \end{equation*}

    Hence the proof is complete, the expected revenue is at least
    $\Omega(\opt/m\log\ell\log^3 n)$.

\end{proofof}


\bibliographystyle{abbrv}
\bibliography{pricingbiblio} 


\end{document}